\newtheorem{theorem}{Theorem}[section]
\newtheorem{lemma}[theorem]{Lemma}
\newtheorem{proposition}[theorem]{Proposition}
\newtheorem{corollary}[theorem]{Corollary}
\newtheorem{fact}[theorem]{Fact}
\def\Z{{\mathbb Z}}
\def\F{{\mathbb F}}
\newcommand{\poly}{\mathrm{poly}}
\newcommand\Mult{{\mathrm{Mult}}}
\newcommand\Inv{{\mathrm{Inv}}}
\newcommand\Id{{\mathrm{Id}}}
\newcommand\Add{{\mathrm{Add}}}
\newcommand{\DLOG}{\mbox{\rmfamily\textsc{DLOG}}}
\newcommand{\CDH}{\mbox{\rmfamily\textsc{CDH}}}
\newcommand{\DDH}{\mbox{\rmfamily\textsc{DDH}}}
\newcommand{\fuc}{\mbox{\rmfamily\textsc{DH}}}
\newcommand{\US}{\mbox{\rmfamily\textsc{US}}}
\newcommand{\QC}{\mbox{\rmfamily\textsc{Q}}}
\newcommand{\RC}{\mbox{\rmfamily\textsc{R}}}
\newcommand{\SECRET}{\mbox{\rmfamily\textsc{SECRET}}}
\newcommand{\suppress}[1]{}
\title{Discrete logarithm and Diffie-Hellman problems in identity
  black-box groups}
\author{ 
G\'abor Ivanyos \thanks{ Institute for Computer Science and Control, Budapest, Hungary 
({\tt Gabor.Ivanyos@sztaki.hu}).} 
\and 
Antoine Joux\thanks{
CISPA Helmholtz Center for Information Security, Saarbr\"ucken, Germany and
Sorbonne Universit\'e, Institut de Math\'ematiques de
Jussieu--Paris Rive Gauche, CNRS, INRIA, Univ Paris Diderot,
Campus Pierre et Marie Curie, 
F-75005 Paris, France
({\tt antoine.joux@m4x.org}).}
\and 
Miklos Santha \thanks{ CNRS, IRIF, Universit\'e de Paris, F-75013 Paris, France;  and Centre for Quantum Technologies and MajuLab, National University of Singapore, 
Singapore 117543
({\tt miklos.santha@gmail.com}).}
}
\begin{document}

\maketitle

\begin{abstract}
{

  We investigate the computational complexity of the discrete
  logarithm, the computational Diffie-Hellman and the decisional
  Diffie-Hellman problems 
  in some identity black-box groups $G_{p,t}$, where $p$ is a prime
  number and $t$ is a positive integer.  These are defined as quotient
  groups of vector space $\Z_p^{t+1}$ by a hyperplane $H$ given
  through an identity oracle. While in general black-box groups 
that have
  unique encoding 
of their elements
these computational problems are classically all
  hard and quantumly all easy, we find that in the groups $G_{p,t}$
  the situation is more contrasted.  We prove that while there is a
  polynomial time probabilistic algorithm to solve 
  the decisional Diffie-Hellman problem in $G_{p,1}$, the
  probabilistic query complexity of all the other problems is
  $\Omega(p)$, and their quantum query complexity is
  $\Omega(\sqrt{p})$.  Our results therefore provide a new example of
  a group where the computational and the decisional Diffie-Hellman
  problems have widely different complexity.
}

\end{abstract}





\section{Introduction}

Black-box groups 
were
introduced by Babai and Szemer\'edi \cite{BabSzem}
for studying the structure of finite matrix groups. In a black-box group,
the group elements are encoded by binary strings of certain length,
the 
group operations and their inverses
are given by oracles.
Similarly, identity testing, that is checking whether an element is equal to
the identity element, is also done with a special identity oracle. 
These oracles are also called black-boxes, giving their names to the groups. 
Identity testing
is required when several strings may encode the same group
element. In this case we speak about non-unique encoding, in opposition
to unique encoding when every group element is encoded by a unique string.
Black-box groups with non-unique encoding are motivated
by their ability to capture factor groups of subgroups of matrix groups by 
certain normal subgroups which admit efficient algorithms for
membership testing. An important example for such a subgroup 
is the solvable radical,
that is the largest solvable normal subgroup. 

A black-box group problem that concerns some global
properties of the group may have no inputs other than
the oracles.
In some other cases 
it might also have standard inputs 
(or inputs, for short),
a finite set of group elements represented by their encodings.
A black-box group algorithm
is allowed to call the oracles for the group operations and for the
identity test and it might also perform arbitrary bit operations.
The query complexity of an algorithm is the number 
of oracle calls, while the running time or computational complexity 
is the number of oracle calls together with the number of other bit
operations, when we are maximizing over both oracle and standard inputs.
In the quantum setting, the oracles are given by unitary operators.

Many classical algorithms have been 
developed
for computations with 
black-box groups~\cite{BeaBab, BabBea, KanSer}, 
for example the identification of the composition factors, even the non-commutative ones. 
When the oracle operations can be simulated by efficient procedures, 
efficient black-box algorithms automatically produce efficient algorithms. 
Permutation groups, finite matrix groups over finite fields and  
over algebraic number fields fit in this context. 
There has been also considerable effort to design quantum algorithms in black-box groups.
In the case of unique encoding efficient algorithms have been conceived 
for the decomposition of Abelian groups into a direct sum of cyclic groups 
of prime power order~\cite{CheMos},
for
order computing and
membership testing in solvable groups~\cite{Wat}, and 
for
solving  the hidden subgroup problem in
Abelian groups~\cite{Mos}.

The discrete logarithm problem $\DLOG$, and various Diffie-Hellman
type problems are fundamental tasks in computational number
theory. They are equally important in cryptography, since the security
of many cryptographic systems is based on their computational
difficulty.  
Let $G$ be a cyclic group (denoted multiplicatively). Given two
group elements $g$ and $h$, where $g$ is a generator, $\DLOG$ asks to
compute an integer $d$ such that $h=g^d$. Given three
group elements
$g,g^a$ and $g^b$, where $g$ is a generator, the computational
Diffie-Hellman problem $\CDH$ is to compute $g^{ab}$.  Given 
four
group elements $g,g^a, g^b$ and $g^c$, where $g$ is a generator, the
decisional Diffie-Hellman problem $\DDH$ is to decide whether $c=ab$
modulo the order of $g$.

The problems are enumerated in decreasing order of difficulty: $\DDH$
can not be harder than $\CDH$ and $\CDH$ is not harder than
$\DLOG$. While there are groups where even $\DLOG$ is easy (for
example $\Z_m$, the additive group of integers modulo $m$), in general
all three problems are thought to be computationally
hard. 
We are not aware of any group where $\CDH$ is easy while $\DLOG$ is
hard. In fact, Maurer and Wolf have proven in~\cite{MaurerWo99} that
under a seemingly reasonable number-theoretic assumption, the two
problems are equivalent in the case of unique-encoding groups.  Based
on this, Joux and Nguyen~\cite{JouNgu} have constructed a
cryptographic group where $\DDH$ is easy to compute while $\CDH$ is as
hard as $\DLOG$.  In generic black-box groups we have provable query
lower bounds for these problems, even in the case of unique
encoding. Shoup has proven~\cite{Shoup} that in $\Z_p$, given as a
black-box group with unique encoding, to solve $\DLOG$ and $\CDH$
require $\Omega(p^{1/2})$ group operations. Subsequently, Damg{\aa}rd,
Hazay and Zottare~\cite{Dam} have established a lower bound of the
same order for $\DDH$. We remark that the Pohlig-Hellman~\cite{Pohlig}
algorithm reduces $\DLOG$ in arbitrary cyclic groups to $\DLOG$ in its
prime order subgroups. Furthermore, in prime order groups (with unique
encodings), Shanks's baby-step giant-step algorithm 
solves 
the problem
using $O(p^{1/2})$ group 
operations, 
thus matching the lower bound for
black-box groups.

Though, as we said, all three problems are considered computationally
intractable on a classical computer, there is a polynomial time quantum algorithm for $\DLOG$
due to Shor~\cite{Shor}. Since $\DLOG$ is an instance of the Abelian hidden subgroup
problem, Mosca's result~\cite{Mos} implies that by a quantum computer it can also be solved efficiently
in black-box groups with unique encoding.

We are concerned here with identity black-box 
groups,
a special class of black-box groups where only the identity test is given by an oracle.
These groups are quotient groups of some explicitly given ambient group. 
An identity black box group $G$ is specified by an ambient group $G'$ and 
an identity oracle $\Id$ which tests membership in some (unknown) normal subgroup $H$ of $G'$.
In $G$ the group operations are explicitly defined by the group operations in $G'$,
and therefore it is  the quotient group $G'/H$. 

Let $p$ be a prime number. More specifically we will study the problems $\DLOG$, $\CDH$ and $\DDH$ 
in identity black-box groups whose ambient group is $\Z_p^{t+1}$, for some positive integer $t$, and where the
normal subgroup $H$, specified by the identity oracle, is isomorphic to $\Z_p^t$. We denote such an
identity black box group by $G_{p,t}$. We fully characterize the complexity of the three problems in these groups.
Our results are mainly query lower bounds: the probabilistic query complexity of all these problems,
except $\DDH$ in level 1 groups, is $\Omega(p)$, and their quantum query complexity is $\Omega(\sqrt{p})$.
These lower bounds are obviously tight since $\DLOG(G_{p,t})$ can be solved, for all $t \geq 1$,
by exhaustive search and by
Grover's algorithm in respective query complexity $p$ and $O(\sqrt{p})$.
We have also one, maybe surprising, algorithmic result: the computational complexity of $\DDH(G_{p,1})$
is polynomial.  Our results can be summarized in the following theorems.\\ \\
{\bf Theorem (Lower bounds)}
\begin{enumerate}
\item
For all $t \geq 1$, the randomized query complexity of $\DLOG(G_{p,t})$ and $\CDH(G_{p,1})$ is $\Omega(p)$.
\item
For all $t \geq 1$, the quantum query complexity of $\DLOG(G_{p,t})$ and $\CDH(G_{p,1})$ is $\Omega(\sqrt{p})$.
\item
For all $t \geq 2$, the randomized query complexity of $\DDH(G_{p,t})$ is $\Omega({p})$.
\item
For all $t \geq 2$, the quantum query complexity of $\DDH(G_{p,t})$ is $\Omega(\sqrt{p})$.
\end{enumerate}
{\bf Theorem (Upper bound)}
$\DDH(G_{p,1})$ can be solved in probabilistic polynomial time in $\log p$.

\section{Preliminaries}
Formally, a {\em black-box group} $G$ is a 4-tuple 
$G = (C, \Mult, \Inv, \Id)$ where $C$ is the
set of admissible codewords, 
$\Mult:C\times C\mapsto C$ is a binary operation, $\Inv:C\rightarrow C$ is a unary operation
and $\Id:C\rightarrow \{0,1\}$ is a unary Boolean function. The operations $\Mult, \Inv$ and the function
$\Id$ are given by {\em  oracles}.
We require that 
there exists a finite group $\widetilde G$ and a surjective 
map $\phi : C \rightarrow \widetilde G$ for which, for every $x,y \in C$, we have $\phi(\Mult(x,y))=
\phi(x)\phi(y)$, $\phi(\Inv(x))=\phi(x)^{-1}$, and $\phi(x)=1_{\widetilde G}$ if and only
if $\Id(x)=1$.
We say that $x$ is (more accurately, encodes)
the identity element in $G$ or  that
$x=1$ if $\Id(x)=1$. With the identity oracle we can test equality since 
$x=y$ in $G$ exactly when $\Id(\Mult(x,\Inv(y)))=1$. We say that a black-box group has {\em unique
encoding} if $\phi$ is a bijection. For abelian groups we also use the additive notation in which case
the binary operation of $G$ is denoted by $\Add$ and its identity element is denoted by $0$.

We are concerned here with a special class of black-box groups which are quotient groups of some explicitly
given group. 
An {\em identity black-box group} is a couple $G = (G', \Id)$ where $G'$ is group and 
the {\em identity oracle} $ \Id:G'\rightarrow \{0,1\}$
is the characteristic function of some (unknown) normal subgroup $H$ of $G'$. 
We call $G'$ the {\em ambient group} of $G$.
In  $G$ the group operations $\Mult$ and $\Inv$ are defined by the group operations in the ambient group $G'$
modulo $H$.
As a consequence, $G$ is the quotient group $G'/H$. 

Let $p$ be a prime number and  let $t\geq 1$ be a positive integer. 
We denote by $\Z_p$ the additive group of integers modulo $p$, by $\F_p$  the finite field of size $p$, 
and by $\Z_p^t$ the $t$-dimensional vector space over $\F_p$.
For $h,k \in \Z_p^{t}$, we denote their scalar product modulo $p$ by $h \cdot k$.

We will work with identity black-box groups whose ambient group is $\Z_p^{t+1}$, and 
the subgroup $H$ is isomorphic to $\Z_p^t$, that is a hyperplane of   $\Z_p^{t+1}$.
Regarding the problems we are concerned with,
the only real restriction of this model is that our black-box group $G$
has (known) prime order $p$. Indeed, let 
$G = (C, \Mult, \Inv, \Id)$ be a cyclic black box group
of order $p$. Let $g_1=g$, $g_2=g^a$, $g_3=g^b$ and $g_4=g^c$ be the
input quadruple for $\DDH$. We define maps 
$\psi_i:\{0,\ldots, p-1\}\rightarrow C$ as $\psi(x)=g_i^x$. Here 
$g_i^x$ is computed by a fixed method based on repeated squaring
and the binary expansion of $x$ using $\poly(\log p)$ iterated applications of
the oracle $\Mult$. We also define an oracle $\Id'$ on 
$\{0,\ldots,p-1\}^4$ by $\Id'(x_1,x_2,x_3,x_4)=
\Id(\phi_1(x_1),\phi_2(x_2),\phi_3(x_3),\phi_4(x_4))$. It is not
difficult to see that this makes $G$ an identity black box group with 
ambient group $\Z_p^4$ where the identity oracle $\Id'$ can be implemented
using a $\poly(\log p)$ calls to $\Mult$ and a single call to $\Id$.
This recipe reduces the given instance of $\DDH$ to an instance
of $\DDH$ in the new setting in an obvious way. Similarly,
$\DLOG$ and $\CDH$ can be reduced to instances with
ambient groups $Z_p^2$ and $\Z_p^3$, respectively.

We will specify the identity oracle by a non-zero normal vector $n \in \Z_p^{t+1}$ of $H$.
By permuting coordinates
and multiplying by some non-zero constant, we can suppose without loss of generality that it is of the form
$n = (1, n_1, \ldots, n_t)$. We call such a vector {\em $t$-suitable}. We define the 
function $\Id_n : \Z_p^{t+1} \rightarrow \{0,1\}$ by $\Id_n (h) = 1$ if $h \cdot n = 0$.
Clearly $\Id_n$ is the characteristic function of the hyperplane $H_n = \{h  \in \Z_p^{t+1}: h \cdot n = 0\}.$
We define the identity black-box group $G_{p,t} = (\Z_p^{t+1}, \Id)$, where the identity oracle $\Id$ satisfies
$\Id = \Id_n$, for some (unknown) $t$-suitable vector $n$.
We call $t$ the {\em level} of the group $G_{p,t}.$
We emphasize again that the group operations of $G_{p,t}$ are performed as group operations in $\Z_p^{t+1}$.
Therefore, for $h,k \in \Z_p^{t+1}$,  the equality $h=k$ in  $G_{p,t}$
means equality in $\Z_p^{t+1}$ modulo $H_n$, where $H_n$ is identified by $\Id_n$.
To be short, we will refer to $G_{p,t}$ as the {\em hidden cyclic group} of level $t$.
We remark that any lower bound for $t$-suitable $n$ remain 
trivially valid for general normal vector $n$. Also, 
the first nonzero coordinate of $n$ can be found 
using at most $t$ queries (namely $\Id(1,0,0,\ldots,0)$,
$\Id(0,1,0, \ldots,0)$, $\ldots$, $\Id(0,\ldots,0,1,0)$). Furthermore,
scaling this coordinate to $1$ does not affect the oracle $\Id_n$. Therefore
$t$-suitability of $n$ affects any upper bound by at most $t$ queries.

\begin{proposition}
\label{prop:group}
The groups $G_{p,t}$ 
and $\Z_p$ are isomorphic and the map $\phi : G_{p,t} \rightarrow \Z_p$ defined by
$\phi(h) = h \cdot n \in \Z_p$ is a group isomorphism.

\end{proposition}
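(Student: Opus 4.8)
The plan is to verify directly that the map $\phi\colon \Z_p^{t+1}\to\Z_p$ given by $\phi(h)=h\cdot n$ descends to a well-defined isomorphism from the quotient group $G_{p,t}=\Z_p^{t+1}/H_n$ onto $\Z_p$. First I would note that $\phi$ is a homomorphism of the ambient additive group: since the scalar product is bilinear, $\phi(h+k)=(h+k)\cdot n = h\cdot n + k\cdot n = \phi(h)+\phi(k)$. Next I would identify the kernel: $\phi(h)=0$ in $\Z_p$ exactly when $h\cdot n=0$, i.e.\ exactly when $h\in H_n$. Thus $\phi$ factors through the quotient $\Z_p^{t+1}/H_n$, and the induced map on $G_{p,t}$ is well-defined and injective.

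For surjectivity I would use that $n$ is $t$-suitable, so its first coordinate is $1$; hence $e_1\cdot n = 1$, which means $\phi(e_1)=1$ generates $\Z_p$, and more generally $\phi(a\,e_1)=a$ for every $a\in\Z_p$. Therefore $\phi$ is onto. Combining injectivity on the quotient with surjectivity gives that $\phi\colon G_{p,t}\to\Z_p$ is a group isomorphism; in particular $|G_{p,t}|=p$ and $G_{p,t}\cong\Z_p$, as claimed. (Equivalently one can invoke the first isomorphism theorem once the kernel has been identified.)

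There is essentially no obstacle here — the statement is a routine application of the first isomorphism theorem once one observes that the defining oracle $\Id_n$ is precisely the indicator of $\ker\phi$. The only point that deserves a sentence of care is that $H_n$ is genuinely a subgroup of the right size: $H_n=\{h: h\cdot n=0\}$ is the kernel of a surjective linear functional on a $(t+1)$-dimensional space, hence a $t$-dimensional subspace, consistent with $H\cong\Z_p^t$ as stipulated in the setup. With that remark in place the proof is immediate.
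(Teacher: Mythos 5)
Your proof is correct and follows essentially the same route as the paper: both arguments observe that $h\mapsto h\cdot n$ is a homomorphism on the ambient group $\Z_p^{t+1}$ with kernel exactly $H_n$, and then invoke the first isomorphism theorem. You merely spell out the surjectivity step (via $\phi(e_1)=1$ from $t$-suitability) that the paper leaves implicit.
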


\begin{proof}
The maps from $\Z_p^{t+1}$ to $G_{p,t}$ (respectively to $\Z_p$) mapping
$h \in \Z_p^{t+1}$ to  its class in the quotient $G_{p,t}$
(respectively to $h \cdot n$) are group homomorphisms with the same kernel $H_n$.
\end{proof}

We recall now the basic notions of query complexity for the specific case of Boolean functional oracle problems.
Let 
$m$ be a positive integer. 
A {\em functional oracle problem} is a function $A: S \rightarrow \{0,1\}^M$, 
where $S \subseteq \{0,1\}^m$ and $M \geq 1$ is a positive integer.
If $M=1$, then we call the functional oracle problem {\em Boolean}.
The input $f \in S$ is given by an oracle, that is $f(x)$ can be accessed by the query $x$. The output on
$f$ is $A(f)$. 
Each query adds one to the complexity of an algorithm, but
all other computations are free.
The state of the computation is represented by three
registers, the query register $1 \leq x \leq m$, the answer register $a \in \{0,1\}$, and the 
work register $z$. The computation takes place in the vector space spanned by all
basis states $|x\rangle|a\rangle|z\rangle$.
In the {\em quantum query} model introduced by Beals et al.~\cite{BBC}
the state of the computation is a complex
combination of all basis states which has unit length in the norm $l_2$.
In the {\em randomized query } model it is a non-negative real combination of unit length
in the norm $l_1$, and in the deterministic model it is always one of the basis states.

The query operation $O_f$ maps the 
basis state
$|x\rangle|a\rangle|z\rangle$ 
into the state $|x\rangle|(a+f(x)) \bmod 2 \rangle|z\rangle$.
Non-query operations do not depend on $f$.
A {\em $k$-query algorithm} is a sequence of $(k+1)$ operations
$(U_0, U_1, \ldots , U_k)$ where $U_i$ is unitary in the quantum,
and stochastic in the randomized model. 
Initially the state of the computation is set to some
fixed value $|0\rangle|0\rangle|0\rangle$, and then the sequence of operations
$U_0, O_f, U_1, O_f$, $\ldots$, $U_{k-1}$, $O_f$, $U_k$ is applied.
A quantum or randomized
algorithm {\em computes} $A$ on input $f$
if the observation of the last $M$ bits of the work register
yields $A(f)$ with probability 
at least $2/3$.
Then $\QC(A)$
(respectively $\RC(A)$) is the smallest $k$ for which there exists
a $k$-query quantum (respectively randomized) algorithm which 
computes $A$ on every input $f$. 
We have 
$ \RC(A) \leq \QC(A) \leq m$.

We define now the
problems we are concerned with,
the {\em discrete logarithm} problem $\DLOG$, the {\em computational Diffie-Hellman} problem $\CDH$ 
and the {\em decisional Diffie-Hellman} problem $\DDH$ in 
hidden cyclic  groups $G_{p,t}$. 
As in the rest of the paper the additive notation 
will to be more convenient, in contrast to
the informal definitions if the introduction, we use here 
the additive terminology.
We say that a 
quadruple $(g,h,k, \ell) \in G_{p,t}^4$ 
is a DH-{\em quadruple} if
$g$ is a generator of $G_{p,t}$,
$h = ag, k = bg$ and $\ell = cg$ for some integers $a,b,c$ such that $c = ab$ modulo $p$.
\\
\vbox{\begin{quote}
\DLOG$(G_{p,t})$ \\
{\em Oracle input:} $\Id_n$ for some $t$-suitable vector $n$. \\
{\em Input:} 
A couple $(g,h) \in G_{p,t}^2$ such that $g$ is a generator of $G_{p,t}$.\\
{\em Output:} A non-negative integer $d$ such that $dg =h$.
\end{quote}}\\
\vbox{\begin{quote}
\CDH$(G_{p,t})$ \\
{\em Oracle input:} $\Id_n$ for some $t$-suitable vector $n$. \\
{\em Input:} 
A triple $(g,h,k) \in G_{p,t}^3$ such that $g$ is a generator of $G_{p,t}$.\\
{\em Output:} $\ell \in G_{p,t}$ such that $(g,h,k, \ell) $  is a DH-quadruple.
\end{quote}}\\
\vbox{\begin{quote}
\DDH$(G_{p,t})$ \\
{\em Oracle input:} $\Id_n$ for some $t$-suitable vector $n$. \\
{\em Input:} 
A quadruple $(g,h,k, \ell) \in G_{p,t}^4$ such that $g$ is a generator of $G_{p,t}$.\\
{\em Question:} Is $(g,h,k, \ell)$ a DH-quadruple?
\end{quote}}\\

An algorithm for these problems has access to the input
and oracle access to the oracle input,
and every query is counted as one computational step.
We say that it solves the problem {\em efficiently} 
if it works in time polynomial in $\log p$ and $t$.
For any fixed input, the problems become functional oracle problems,
where we consider only those identity oracles for which the input is legitimate.
By their {\em query complexity} we mean,
both in the quantum and in the randomized model,
the maximum, over all inputs, of the respective query complexity of these 
functional oracle problems.

The problems are enumerated in decreasing order of difficulty.
The existence of an efficient algorithm for \DLOG$(G_{p,t})$ implies the existence
of an efficient algorithm for \CDH$(G_{p,t})$, which in turn gives rise to an efficient algorithm 
for \DDH$(G_{p,t})$. For query complexity we have 
$\QC(\DDH(G_{p,t})) \leq \QC(\CDH(G_{p,t}))+1$
and
$\QC(\CDH(G_{p,t})) \leq 2\QC(\DLOG(G_{p,t}))$, 
and the same inequalities hold for the randomized model.
The problems are getting harder as the level
of the hidden cyclic group increases, as 
the almost trivial reductions in the next Proposition show. To ease notation,
for $h = (h_0, \ldots, h_t) $ in $\Z_p^{t+1}$,
we denote by $h'$ the element $(h_0, \ldots, h_t, 0) \in \Z_p^{t+2}$.

\begin{proposition}
\label{prop:reduction} 
For every $t \geq 1$,
$\DLOG(G_{p,t})$ and $\DDH(G_{p,t})$ are polynomial time 
many-one reducible to  respectively $\DLOG(G_{p,t+1})$ and  $\DDH(G_{p,t+1})$;
and $ \CDH(G_{p,t})$ is commutable in polynomial time with a single query to $\CDH(G_{p,t+1})$.
\end{proposition}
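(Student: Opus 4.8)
The plan is to exploit the obvious embedding $h\mapsto h'$ of $\Z_p^{t+1}$ into $\Z_p^{t+2}$ that appends a zero coordinate, together with the companion map on normal vectors $n=(1,n_1,\ldots,n_t)\mapsto n':=(1,n_1,\ldots,n_t,0)$. The first thing to check is that $n'$ is $(t+1)$-suitable whenever $n$ is $t$-suitable; this is immediate since the leading coordinate of $n'$ is again $1$, so $\Id_{n'}$ is a legitimate oracle input for the level-$(t+1)$ problems.

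Next I would record two compatibility identities. For every $k=(k_0,\ldots,k_{t+1})\in\Z_p^{t+2}$ one has $k\cdot n'=(k_0,\ldots,k_t)\cdot n$. This gives, first, $\Id_{n'}(k)=\Id_n(k_0,\ldots,k_t)$, so that a single call to $\Id_n$ simulates one call to $\Id_{n'}$ (in particular $\Id_{n'}(h')=\Id_n(h)$ for $h\in\Z_p^{t+1}$); and second, writing $\phi_t$ and $\phi_{t+1}$ for the isomorphisms of Proposition~\ref{prop:group} at the two levels, $\phi_{t+1}(k)=\phi_t(k_0,\ldots,k_t)$, and in particular $\phi_{t+1}(h')=\phi_t(h)$. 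From the latter it follows that $g'$ generates $G_{p,t+1}$ iff $g$ generates $G_{p,t}$ (both say $g\cdot n\neq 0$), that $dg'=h'$ in $G_{p,t+1}$ iff $dg=h$ in $G_{p,t}$, and that $(g,h,k,\ell)$ is a DH-quadruple in $G_{p,t}$ iff $(g',h',k',\ell')$ is one in $G_{p,t+1}$, since the DH-condition only constrains the images under $\phi$ in $\Z_p$.

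The three reductions then drop out. For $\DLOG$: given $(\Id_n,(g,h))$, run a solver for $\DLOG(G_{p,t+1})$ on oracle $\Id_{n'}$, answering each of its queries by one query to $\Id_n$, and on input $(g',h')$; return the integer $d$ it outputs. By the compatibility identities $(g',h')$ is a legitimate input and $dg'=h'$ iff $dg=h$, so $d$ is already correct and no transformation of the output is needed --- a many-one reduction. The case of $\DDH$ is identical, the Boolean verdict for $(g',h',k',\ell')$ being exactly the verdict for $(g,h,k,\ell)$. For $\CDH$: given $(\Id_n,(g,h,k))$, make a single query to a solver for $\CDH(G_{p,t+1})$ on input $(g',h',k')$ (its queries to $\Id_{n'}$ again simulated by $\Id_n$), obtaining some $\ell'=(\ell'_0,\ldots,\ell'_{t+1})$ with $(g',h',k',\ell')$ a DH-quadruple, and return $\ell:=(\ell'_0,\ldots,\ell'_t)\in\Z_p^{t+1}$. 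Since $\phi_t(\ell)=\phi_{t+1}(\ell')$, the quadruple $(g,h,k,\ell)$ is a DH-quadruple in $G_{p,t}$; because this last step truncates the returned element, the reduction is stated as a single oracle call followed by a polynomial-time computation rather than as a many-one reduction. In all three cases the number of queries to $\Id_n$ equals the number the higher-level algorithm makes to $\Id_{n'}$, so query complexities transfer directly as well.

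There is essentially no hard step here; as the authors remark, the proposition is almost trivial. The only points that need a moment's care are the two just flagged: checking that $n'$ remains $(t+1)$-suitable and that the generator promise survives the embedding (both reduce to $g\cdot n\neq 0$), and, for $\CDH$, observing that the element returned by the level-$(t+1)$ oracle lives in $\Z_p^{t+2}$ and must be truncated, and that this truncation is compatible with $\phi_t$ and $\phi_{t+1}$ so that it does yield a valid $\CDH(G_{p,t})$ answer.
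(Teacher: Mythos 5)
Your proposal is correct and follows essentially the same route as the paper's proof: the zero-padding embedding $h\mapsto h'$, the companion map $n\mapsto n'$ with the identity $k\cdot n'=(k_0,\ldots,k_t)\cdot n$ used both to simulate $\Id_{n'}$ by $\Id_n$ and to transfer the generator/DH-quadruple conditions, and the truncation of the $\CDH(G_{p,t+1})$ answer back to $\Z_p^{t+1}$. Your write-up is if anything slightly more explicit than the paper's (e.g.\ checking $(t+1)$-suitability of $n'$ and stating the compatibility for arbitrary queries in $\Z_p^{t+2}$), but there is no substantive difference.
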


\begin{proof}
First observe that the identity oracle 
$\Id_{n'}$ in $G_{p,t+1}$ can be simulated 
by the identity oracle $\Id_n$ of $G_{p,t}$. Indeed, for an arbitrary element $h^*$  in $G_{p,t+1}$, where
$h^* = (h_0, h_1, \ldots , h_t, h_{t+1})$, set $h = (h_0, h_1, \ldots , h_t)$. 
Then $h^* \cdot n' = h \cdot n$.
Let $g$ be a generator of $G_{p,t}$ with identity oracle $\Id_n$, that is $g \cdot n \neq 0$. 
Then $g' = (g,0)$ is a generator of
$G_{p,t+1}$ with identity oracle $\Id_{n'}$, 
since $g' \cdot n' = g \cdot n$, and therefore $g' \cdot n' \neq 0$. 

For arbitrary
$g,h,k,\ell \in G_{p,t}$, and for every integer $d$, we have
$dg =h$ if and only if $dg' =h'$. Similarly, 
$(g,h,k, \ell) $  is a DH-quadruple if and only if 
$(g',h',k', \ell') $  is a DH-quadruple. This gives the many-one reductions for $\DLOG$ and $\DDH$.
In the $\CDH$ reduction, on an instance $(g,h,k)$, 
we call $\CDH(G_{p,t+1})$ on instance $(g',h',k')$. Suppose that it gives the answer
$\ell^* =  (\ell_0, \ell_1, \ldots , \ell_t, \ell_{t+1})$. We set 
$\ell =  (\ell_0, \ell_1, \ldots , \ell_t)$, 
and observe that  $(g,h,k, \ell) $  is a DH-quadruple because 
$(g',h',k', \ell^*) $  is a DH-quadruple and $\ell^* \cdot n' = \ell \cdot n$.
\end{proof}


\section{The complexity in groups of level 1}

In most parts of this section we restrict ourselves to the case $t=1$. To simplify notation, we set $n = (1,s)$ 
and we denote 
the identity oracle $\Id_n$ by $\Id_s$ and the line $H_n$ 
of the plane $\Z_p^2$
by $H_s$. 
Also, we refer to $s$ as the {\em secret}. As it turns out, solving $\DLOG(G_{p,1})$
or $\CDH(G_{p,1})$ is essentially as hard as finding the secret, therefore 
we 
formally define 
this problem as
\\
\vbox{\begin{quote}
\SECRET$(G_{p,1})$ \\
{\em Oracle input:} $\Id_s$ for some $s \in \Z_p$.
\\
{\em Output:} $s$.
\end{quote}}

What is the 
query
complexity of finding $s$, that is how many calls to the identity oracle are needed for that task?
To answer this question, we consider $\US$, the well studied unstructured search problem.
For this, let $C$ be an arbitrary set, and let $s \in C$ be an
arbitrary distinguished element. Then the {\em Grover oracle} $\Delta_s : C \rightarrow \{0,1\}$
is the Boolean function such that $\Delta_s(x) = 1$ if and only if $x=s$. 
The {\em unstructured search} problem over $C$ is defined as
\\
\vbox{\begin{quote}
\US$(C)$ \\
{\em Oracle input:} $\Delta_s$ for some $s \in C$.
\\
{\em Output:} $s$.
\end{quote}}

Suppose that the size of $C$ is $N$. 
It is easily seen that  probabilistic query complexity of 
$\US(C)$ is linear in $N$. The quantum query complexity of the problem is also well studied.
Grover~\cite{Grover} has determined that it can be solved with $O(\sqrt{N})$ queries, while Bennett 
et 
al.~\cite{BBBV}
have shown that $\Omega(\sqrt{N})$ queries are also necessary.

\begin{fact}
\label{fact}
For $|C| = N $, the randomized query complexity of $\US(C)$ is $\Theta (N)$ and its quantum query complexity is
$\Theta(\sqrt{N})$.
\end{fact}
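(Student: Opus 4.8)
The plan is to establish the four bounds --- randomized upper, randomized lower, quantum upper, quantum lower --- separately, since the upper and lower bounds have quite different flavours. The randomized upper bound is immediate: the deterministic algorithm that queries the elements of $C$ one at a time identifies $s$ after at most $N-1$ queries, because the first query returning $1$ reveals $s$, and if the first $N-1$ queries all return $0$ then the single remaining element must be $s$. Hence $\RC(\US(C)) = O(N)$.

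For the randomized lower bound I would use the fact that a randomized $k$-query algorithm is a distribution over deterministic $k$-query algorithms, so it suffices to bound the success probability, under the uniform distribution on the secret $s \in C$, of an arbitrary deterministic $k$-query algorithm $\mathcal{A}$. Running $\mathcal{A}$ against the all-zero oracle makes it deterministic and fixes a set $Q$ of at most $k$ queried positions together with a final guess $z$, neither of which depends on $s$. For the actual random secret, whenever $s \notin Q$ every genuine oracle answer is $0$, so $\mathcal{A}$ follows exactly this run and outputs $z$, which is correct only if $z = s$; and when $s \in Q$ we crudely bound the success probability by $1$. Therefore
\[
  \prob[\,\mathcal{A}\text{ succeeds}\,] \;\le\; \prob[\,s \in Q\,] + \prob[\,z = s \text{ and } s \notin Q\,] \;\le\; \frac{k}{N} + \frac{1}{N} \;=\; \frac{k+1}{N},
\]
so success probability $2/3$ forces $k \ge 2N/3 - 1$, that is $\RC(\US(C)) = \Omega(N)$.

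The two quantum bounds are exactly the content of the cited works, and I would simply appeal to them. For the upper bound, since precisely one element of $C$ is marked, Grover's algorithm~\cite{Grover} returns $s$ with constant probability after $O(\sqrt{N})$ queries, and a constant number of independent repetitions boosts the success probability above $2/3$; hence $\QC(\US(C)) = O(\sqrt{N})$. For the lower bound, the hybrid argument of Bennett, Bernstein, Brassard and Vazirani~\cite{BBBV} shows that after $k$ queries the algorithm's state, averaged over $s \in C$, is within $O(k/\sqrt{N})$ in Euclidean norm of the state it would reach on the all-zero oracle; since a correct algorithm must in particular identify which of the $N$ oracles $\Delta_s$ it has been given, this forces $k = \Omega(\sqrt{N})$, i.e. $\QC(\US(C)) = \Omega(\sqrt{N})$.

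The only genuinely technical ingredient here is the BBBV hybrid argument underlying the quantum $\Omega(\sqrt{N})$ bound; as the statement is recorded as a known Fact, I would not reproduce it but cite~\cite{BBBV} (and~\cite{Grover} for the matching algorithm), writing out in full only the elementary randomized bounds above.
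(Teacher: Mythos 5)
Your proposal is correct; the paper itself gives no proof of this Fact, merely asserting the randomized bound as ``easily seen'' and citing Grover and Bennett et al.\ for the quantum bounds, which is exactly what you do. Your explicit write-up of the randomized $\Omega(N)$ lower bound (averaging over deterministic algorithms against the uniform distribution on $s$, comparing to the all-zero transcript) is the standard argument and is sound.
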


The relationship between $\US$ and the problem $\SECRET$ is given by the fact that
the identity oracle $\Id_s$ and the Grover oracle $\Delta_{s}$ 
can simulate each other with a single query.

\begin{proposition}
\label{prop:simulation}
The identity oracle $\Id_s$ of  $G_{p,s}$ and the
Grover oracle $\Delta_{s}$, defined over $\Z_p$,
can simulate each other with  at most one query.
\end{proposition}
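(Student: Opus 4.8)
The plan is to establish the two directions of the simulation separately, each time showing how a single query to one oracle—together with free, oracle-independent arithmetic in $\Z_p$—answers a single query to the other. The underlying identity is elementary: for $h = (h_0, h_1) \in \Z_p^2$ we have $\Id_s(h) = 1$ iff $h \cdot n = h_0 + h_1 s = 0$, so on the affine line $h_0 = -h_1 s$ the oracle $\Id_s$ detects membership in $H_s$, whereas $\Delta_s$ over $\Z_p$ detects equality with the scalar $s$ itself.

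First I would simulate $\Delta_s$ using $\Id_s$. Given a query $x \in \Z_p$ to the Grover oracle, I want to know whether $x = s$. Observe that $x = s$ in $\Z_p$ is equivalent to $s - x = 0$, which in turn is equivalent to $(- x, 1) \cdot (1, s) = -x + s = 0$, i.e.\ to $\Id_s(-x, 1) = 1$. Hence a single call $\Id_s(-x,1)$ returns exactly $\Delta_s(x)$, and the pre- and post-processing (negating $x$, forming the pair) are free arithmetic operations in $\Z_p$ not depending on the oracle.

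Conversely I would simulate $\Id_s$ using $\Delta_s$. Given a query $h = (h_0, h_1) \in \Z_p^2$ to the identity oracle, I must decide whether $h_0 + h_1 s = 0$. If $h_1 = 0$ this is decided with no query at all (the answer is $1$ iff $h_0 = 0$). If $h_1 \neq 0$, then $h_0 + h_1 s = 0$ iff $s = -h_0 h_1^{-1}$, so a single call $\Delta_s(-h_0 h_1^{-1})$ returns exactly $\Id_s(h)$; computing $h_1^{-1}$ and the product is again free oracle-independent arithmetic. Thus in every case at most one query to $\Delta_s$ suffices. Both reductions must, for the quantum model, be phrased as reversible (unitary) pre-/post-processing maps on the query register so that the query operators $O_{\Id_s}$ and $O_{\Delta_s}$ are conjugate up to such unitaries; since the maps $x \mapsto (-x,1)$ and $(h_0,h_1) \mapsto -h_0 h_1^{-1}$ (on the relevant subdomain) are injective, they extend to permutations of the appropriate index sets and this causes no difficulty.

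I do not anticipate a genuine obstacle here; the only point requiring a little care is the degenerate case $h_1 = 0$ in the second direction, which is why the statement says ``at most one query'' rather than ``exactly one''. Packaging the simulations so that they are legitimate in the randomized and quantum query models—i.e.\ implemented by stochastic or unitary operators acting only on the query and work registers—is routine given the definitions recalled in the Preliminaries.
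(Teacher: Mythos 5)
Your proof is correct and follows essentially the same route as the paper: query $\Id_s$ on a point of the form $(\pm x,\mp 1)$ (the paper uses $(x,-1)$, you use $(-x,1)$, which is the same line) for one direction, and query $\Delta_s(-h_0h_1^{-1})$ with the degenerate case $h_1=0$ handled separately for the other. The additional remark about reversibility in the quantum model is a reasonable extra precaution but not needed beyond what the paper records.
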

\begin{proof}
The simulation of the Grover oracle by the identity oracle is simple: for $ x \in \Z_p$ just query $\Id_s$ on
$(x,-1)$. 

For the reverse direction, 
let $h = (h_0,h_1)$ be an input to the identity oracle. Then
$h$ 
encodes
the identity element, that is 
$\Id_s(h) = 1$,
 if and only if $-h_0 =sh_1$. When $h_1$ is invertible in $\Z_p$
we can check by the Grover oracle if $-h_0 h_1^{-1} =s$. For $h_1 = 0$ the only possible value 
for $h_0$
to put $h$ into $H_s$
is 0. Therefore we have
\[
  \Id(h) =
  \begin{cases}
    1 & \text{if $ h = (0,0)$} \\
    0 & \text{if $ h_1 = 0 $ and $h_0 \neq 0$} \\
    \Delta_s(-h_0 h_1^{-1}) & \text{otherwise}.
  \end{cases}
\]
\end{proof}

\begin{corollary}
\label{cor:secret}
The randomized query complexity of \SECRET$(G_{p,1})$ is $\Theta (p)$ and its quantum query complexity is
$\Theta(\sqrt{p})$.
\end{corollary}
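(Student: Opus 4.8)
The plan is to read off the corollary directly from Fact~\ref{fact} and Proposition~\ref{prop:simulation}, since together they already contain all the content. The key observation is that \SECRET$(G_{p,1})$ and \US$(\Z_p)$ are essentially the same problem: an instance of the former is presented as the oracle $\Id_s$ for an unknown $s\in\Z_p$, an instance of the latter as the Grover oracle $\Delta_s$ for the same unknown $s\in\Z_p$, and in both cases the desired output is $s$ itself. So it is enough to show that an algorithm for either problem can be converted into an algorithm for the other at the cost of only a constant additive number of queries.

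First I would invoke Proposition~\ref{prop:simulation}: a single call to $\Delta_s$ is simulated by the single call $\Id_s(x,-1)$, and conversely a single call to $\Id_s(h)$ is simulated by at most one call to $\Delta_s$ (on $-h_0h_1^{-1}$) preceded by a query-free case distinction on whether $h_1=0$. Hence every $k$-query algorithm for \US$(\Z_p)$ yields a $k$-query algorithm for \SECRET$(G_{p,1})$, and vice versa, in both the randomized and the quantum models. In the quantum case the only point that needs a sentence of care is that the classical substitutions $x\mapsto(x,-1)$ and $h\mapsto -h_0h_1^{-1}$ used in the simulation are reversible (or can be made so with ancillae and a final uncomputation), so that they are implemented by unitaries which can be folded into the free non-query operations $U_i$ of the algorithm. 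I expect this reversibility bookkeeping to be the only mildly delicate step, and it is entirely routine.

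Then I would apply Fact~\ref{fact} with $N=\size{\Z_p}=p$, giving randomized query complexity $\Theta(p)$ and quantum query complexity $\Theta(\sqrt{p})$ for \US$(\Z_p)$. Combining this with the two-way simulation of the previous paragraph and absorbing the $O(1)$ additive terms into the $\Theta$, we obtain $\RC(\SECRET(G_{p,1}))=\Theta(p)$ and $\QC(\SECRET(G_{p,1}))=\Theta(\sqrt{p})$. There is no genuine obstacle: the whole argument is the conjunction of Fact~\ref{fact} and Proposition~\ref{prop:simulation}, plus the trivial remark that constant additive query overhead does not affect the $\Theta$.
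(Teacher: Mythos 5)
Your proposal is correct and matches the paper's intended derivation exactly: the corollary is obtained by combining the two-way one-query simulation of Proposition~\ref{prop:simulation} with the known bounds for unstructured search in Fact~\ref{fact}, with $N=p$. The paper leaves this proof implicit, and your extra remark about reversibility of the substitutions in the quantum simulation is a reasonable (routine) point of care.
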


%

We will now consider the reductions of \SECRET$(G_{p,1})$ to $\DLOG(G_{p,1})$ and $\CDH(G_{p,1})$.
The case of $\DLOG(G_{p,1})$ in fact follows from the case of $\CDH(G_{p,1})$, but it is so simple
that it is worth to describe it explicitly.

\begin{lemma}
\label{lem:DLOG}
The secret $s$ in $G_{p,1}$ can be found with a single oracle call
to $\DLOG(G_{p,1})$
\end{lemma}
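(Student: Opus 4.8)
The plan is to turn a single discrete logarithm query into a direct read-out of the secret, by choosing the base point so that the isomorphism of Proposition~\ref{prop:group} carries it to $1\in\Z_p$. Recall from that proposition that $\phi(h)=h\cdot n=h_0+sh_1$ is a group isomorphism $G_{p,1}\to\Z_p$. First I would set $g=(1,0)$. Since $\phi(g)=1$ generates $\Z_p$, the element $g$ is a generator of $G_{p,1}$ \emph{for every} value of the secret $s$; consequently, for any choice of second component $h$, the pair $(g,h)$ is a legitimate input to $\DLOG(G_{p,1})$, and the oracle is obliged to answer.

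Next I would take $h=(0,1)$, so that $\phi(h)=s$. Querying $\DLOG(G_{p,1})$ once on $(g,h)$ returns a non-negative integer $d$ with $dg=h$ in $G_{p,1}$. Applying $\phi$ to this identity and using $\phi(g)=1$, $\phi(h)=s$ gives $d\equiv s\pmod p$, so the algorithm outputs $d\bmod p=s$, solving $\SECRET(G_{p,1})$ with exactly one oracle call (plus the $O(1)$ work of forming $g$, $h$ and reducing $d$ modulo $p$).

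There is essentially no obstacle: the only point that needs a line of justification is that the constructed $\DLOG$ instance is valid regardless of $s$, which is immediate from $\phi(1,0)=1\neq 0$. (Note this also covers the degenerate case $s=0$, where $h$ encodes the identity and $d=0$ is returned.) The genuinely substantive reduction is the one from $\SECRET(G_{p,1})$ to $\CDH(G_{p,1})$ treated next, which reproves this lemma as well via the relation $\CDH\le 2\,\DLOG$; here we have merely recorded the trivial direct argument.
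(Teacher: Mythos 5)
Your proof is correct and follows exactly the same route as the paper's: query $\DLOG(G_{p,1})$ on $((1,0),(0,1))$ and read off $s$ via $\phi(g)=1$, $\phi(h)=s$. Nothing to add.
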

\begin{proof}
First observe that $(1,0)$ is a generator of $G_{p,1}$, for every $s$.
The algorithm calls $\DLOG(G_{p,1})$
on input $(g,h) = ((1,0), (0,1))$. Since $\phi(g)  =1$ and $\phi(h)  =s$
where $\phi$ is as in 
Proposition~\ref{prop:group},
the oracle's answer is
the secret $s$ itself.
\end{proof}

We remark that with overwhelming probability we could have given also a random couple $(g,h) \in G_{p,1}^2$ to the oracle, where $g$ is a generator. Indeed, let's suppose that 
$d$ is the discrete logarithm. Then $h - dg \in H_s$, and therefore $s = - (h_0 -dg_0) (h_1 - d g_1)^{-1}$, where the
operations are done in $\Z_p$, under the condition that $h_1 - d g_1 \neq 0$, which happens with probability
$(p-1)/p$.

The reduction of \SECRET$(G_{p,1})$ to  $\CDH(G_{p,1})$ requires more work. The main idea of the reduction
is to extend 
$G_{p,t}$
to a field and use the multiplication for the characterization of DH-quadruples.
Indeed, since $\Z_p$ is the additive group of the field $\F_p$, we can use the isomorphism $\phi$ 
of Proposition~\ref{prop:group} between $G_{p,t}$ and $\Z_p$
to define appropriate multiplication  and multiplicative inverse operations. This extends
$G_{p,t}$ to a field isomorphic to $\F_p$ which we denote 
by $F_{p,t}$. This process is completely standard but we describe it for completeness.
The definitions of these two operations are:
\begin{align*}
hk  & = \phi^{-1} (\phi(h) \phi(k)),  \\
    h^{-1} & = \phi^{-1}(\phi(h)^{-1}).
\end{align*}
With these operations the map $\phi$ becomes a field isomorphism between $F_{p,t}$ and $\F_p$.

\begin{proposition}
\label{prop:field}
The map $\phi$ of Proposition~$\ref{prop:group}$ is an isomorphism between $F_{p,t}$ and $\F_p$.
\end{proposition}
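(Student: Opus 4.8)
The plan is to prove Proposition~\ref{prop:field} by transport of structure along the bijection $\phi$. By Proposition~\ref{prop:group}, $\phi\colon G_{p,t}\to\Z_p$ is a bijective homomorphism for the additive structures, so $\phi$ is invertible and $\phi^{-1}$ is additive as well. I would first note that this already makes the two new operations on $F_{p,t}$ well defined: for $h,k\in F_{p,t}$ the products $\phi(h)\phi(k)$ and (for $h\neq 0$, i.e. $\phi(h)\neq 0$) the inverses $\phi(h)^{-1}$ lie in $\F_p$, which is the domain of $\phi^{-1}$, so $hk=\phi^{-1}(\phi(h)\phi(k))$ and $h^{-1}=\phi^{-1}(\phi(h)^{-1})$ are unambiguous.

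The key computational step is to record that $\phi$ intertwines multiplication: directly from the definition,
\[
\phi(hk)=\phi\bigl(\phi^{-1}(\phi(h)\phi(k))\bigr)=\phi(h)\phi(k),
\]
and likewise $\phi(h^{-1})=\phi(h)^{-1}$ for $h\neq 0$. Combined with the additive identity $\phi(h+k)=\phi(h)+\phi(k)$ from Proposition~\ref{prop:group}, this says that $\phi$ is a bijection respecting both binary operations of $F_{p,t}$.

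From here every field axiom of $\F_p$ transfers verbatim to $F_{p,t}$, since each is a universally quantified equational identity or the assertion that an element satisfying such an identity exists: commutativity and associativity of $+$ and $\cdot$, distributivity, and the neutral/inverse laws. Concretely, I would point out that the additive zero of $F_{p,t}$ is $\phi^{-1}(0)$ (the class of $H_n$), the multiplicative identity is $\phi^{-1}(1)$, and for $h\neq 0$ the element $h^{-1}$ defined above is indeed its multiplicative inverse because $\phi(hh^{-1})=\phi(h)\phi(h)^{-1}=1=\phi(\phi^{-1}(1))$ and $\phi$ is injective; distributivity in $F_{p,t}$ follows by applying $\phi^{-1}$ to distributivity in $\F_p$. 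Hence $F_{p,t}$ is a field and $\phi$ is a bijective ring homomorphism, i.e. a field isomorphism onto $\F_p$.

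I do not expect a genuine obstacle: the content is entirely transport of structure, and the only points that need a sentence of care are the well-definedness of the two new operations (which uses that $\phi$ is a bijection and that $\F_p$ is closed under multiplication and nonzero inversion) and the identification of the multiplicative unit and inverses via injectivity of $\phi$. In particular the additive zero $\phi^{-1}(0)$ is the unique non-invertible element of $F_{p,t}$, exactly mirroring $\F_p$.
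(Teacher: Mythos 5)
Your proposal is correct and follows essentially the same route as the paper: the paper's entire proof is your ``key computational step,'' namely that $\phi(hk)=\phi(h)\phi(k)$ and $\phi(h^{-1})=\phi(h)^{-1}$ hold by definition of the transported operations. The additional verification of well-definedness and the transfer of the field axioms is a harmless elaboration of the same transport-of-structure idea.
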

\begin{proof}
By definition $\phi(hk) = \phi(h) \phi(k)$ and   $\phi ( h^{-1} ) = \phi(h)^{-1}.$
\end{proof}

The field structure of $F_{p,t}$ yields a very useful characterization of DH-quadruples. 
\begin{proposition}
\label{prop:DH-quadruple}
Let $g$ be a generator of $G_{p,t}$. In $F_{p,t}$ the quadruple $(g,h,k,\ell)$ is a $\fuc$-quadruple if and only if
$$
g\ell - hk = 0.
$$
\end{proposition}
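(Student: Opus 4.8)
The plan is to push the displayed identity through the field isomorphism $\phi$ of Proposition~\ref{prop:field} and reduce everything to an elementary computation in $\F_p$. Since $\phi$ is simultaneously additive and multiplicative, the equation $g\ell - hk = 0$ in $F_{p,t}$ holds if and only if $\phi(g)\phi(\ell) = \phi(h)\phi(k)$ in $\F_p$. So the whole statement will follow once this last equality is shown to be equivalent to the defining condition of a $\fuc$-quadruple.

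First I would invoke the hypothesis that $g$ is a generator of $G_{p,t}$. By Proposition~\ref{prop:group} this means $\phi(g) = g\cdot n \neq 0$, so $\phi(g)$ is a unit of $\F_p$; moreover, being a generator, $g$ has the property that every element of $G_{p,t}$ is an integer multiple of it. Hence I can write $h = ag$, $k = bg$ and $\ell = cg$ for some integers $a,b,c$, and applying the group isomorphism $\phi$ gives $\phi(h) = a\,\phi(g)$, $\phi(k) = b\,\phi(g)$ and $\phi(\ell) = c\,\phi(g)$.

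Substituting these into $\phi(g)\phi(\ell) = \phi(h)\phi(k)$ turns the equation into $c\,\phi(g)^2 = ab\,\phi(g)^2$ in $\F_p$; cancelling the nonzero scalar $\phi(g)^2$ leaves $c \equiv ab \pmod p$, which is precisely the condition that $(g,h,k,\ell)$ be a DH-quadruple. Reading the chain of equivalences in both directions then proves the proposition.

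There is no genuinely hard step here; the one point to be careful about is that the characterization really does require $g$ to be a generator, since it is exactly the invertibility of $\phi(g)$ that allows the cancellation of $\phi(g)^2$ and hence the passage from $c\,\phi(g)^2 = ab\,\phi(g)^2$ back to $c = ab$. Without that hypothesis the equation $g\ell - hk = 0$ would be strictly weaker than the DH-condition.
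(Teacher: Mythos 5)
Your proposal is correct and follows essentially the same route as the paper: both write $h=ag$, $k=bg$, $\ell=cg$, reduce $g\ell-hk=0$ to $(c-ab)g^2=0$ (you just perform the cancellation on the $\F_p$ side via $\phi$ rather than inside $F_{p,t}$), and use that $g$ is a generator, hence nonzero, to cancel the square and conclude $c=ab$. No substantive difference.
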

\begin{proof}
Let $h = ag, k = bg$ and $\ell = cg$ for some integers $a,b,c$. Using the field structure of $F_{p,t}$,
it is true that $g\ell - hk = 0$ if and only if $(c-ab)g^2 = 0$. Since $F_{p,t}$ is isomorphic to $\F_p$, an element
$g$ is a generator of the additive group $\Z_p$ exactly when $g \neq 0$, and therefore when $g^2$ is a generator.
Therefore $(c-ab)g^2 = 0$ if an only if $c = ab$.
\end{proof}

We define the application $\chi : \Z_p^{t+1} \rightarrow \F_p[x_1, \ldots , x_t]$, 
from  $\Z_p^{t+1}$ to the ring of $t$-variate polynomials
over $\F_p$, where the image $\chi(h)$ of $h = (h_0, h_1, \ldots, h_t) \in \Z_p^{t+1}$ is the polynomial 
$p_h(x_1, \ldots , x_t) = 
h_0 + \sum_{i=1}^t h_ix_i$.
Observe that $p_h(n_1, \ldots n_t) = h \cdot n$, 
therefore the isomorphism $\phi$ between $G_{p,t}$ with identity oracle $\Id_n$ and $\Z_p$ can 
also be
expressed as
$\phi(h) = p_h(n_1, \ldots n_t)$.

\begin{proposition}
\label{prop:characterization}
Let $g$ be a generator of $G_{p,1}$ and let $h,k,\ell$ be arbitrary elements.
Then $(g,h,k,\ell)$  is a DH-quadruple if and only if $s$ is a root
of the polynomial $p_g(x) p_{\ell}(x) - p_h(x) p_k(x)$.
\end{proposition}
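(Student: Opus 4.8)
The plan is to combine the field-theoretic characterization of DH-quadruples in Proposition~\ref{prop:DH-quadruple} with the explicit description of the isomorphism $\phi$ as a polynomial evaluation. First I would recall that, applying Proposition~\ref{prop:DH-quadruple} with $t=1$ (which is legitimate since $g$ is assumed to be a generator of $G_{p,1}$), the quadruple $(g,h,k,\ell)$ is a DH-quadruple if and only if $g\ell - hk = 0$ in the field $F_{p,1}$.

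Next I would transport this equation through the field isomorphism $\phi \colon F_{p,1} \to \F_p$ of Proposition~\ref{prop:field}. Since $\phi$ is a ring isomorphism, $\phi(g\ell - hk) = \phi(g)\phi(\ell) - \phi(h)\phi(k)$, and $\phi$ is injective, so $g\ell - hk = 0$ in $F_{p,1}$ if and only if $\phi(g)\phi(\ell) - \phi(h)\phi(k) = 0$ in $\F_p$. Finally I would use the identity $\phi(h) = p_h(s)$, which holds because for $t=1$ the suitable vector is $n=(1,s)$, so $\phi(h) = p_h(n_1) = p_h(s)$ by the formula noted just before the statement. This lets me rewrite $\phi(g)\phi(\ell) - \phi(h)\phi(k)$ as $p_g(s)p_\ell(s) - p_h(s)p_k(s)$, which is exactly the value at $x=s$ of the (degree at most $2$) polynomial $p_g(x)p_\ell(x) - p_h(x)p_k(x)$. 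Hence $(g,h,k,\ell)$ is a DH-quadruple if and only if $s$ is a root of this polynomial.

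I do not expect a genuine obstacle here; the whole argument is a short chain of equivalences. The only point requiring care is bookkeeping about where the arithmetic happens: the product $g\ell$ and the difference $g\ell - hk$ live in $F_{p,1}$, whose multiplication is defined through $\phi$, whereas $p_g(x)p_\ell(x) - p_h(x)p_k(x)$ is an honest polynomial over $\F_p$ evaluated with ordinary field arithmetic. Checking that $\phi$ correctly mediates between the two is immediate from Proposition~\ref{prop:field} together with the expression $\phi(h) = p_h(n_1,\dots,n_t)$, so the verification is routine.
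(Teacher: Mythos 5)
Your proof is correct and follows essentially the same route as the paper: reduce to $g\ell - hk = 0$ via Proposition~\ref{prop:DH-quadruple}, then use the field isomorphism $\phi$ and the identity $\phi(h)=p_h(s)$ to translate this into the vanishing of $p_g(s)p_\ell(s)-p_h(s)p_k(s)$. The paper's version is just a more compressed statement of the same chain of equivalences.
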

\begin{proof}
By Proposition~\ref{prop:DH-quadruple} we know that $(g,h,k,\ell)$  is a DH-quadruple if and only if
$g\ell - hk = 0$, that is when $p_{g\ell - hk}(s) =0$. Now 
Proposition~\ref{prop:field} implies that this happens exactly when $p_g(s) p_{\ell}(s) - p_h(s) p_k(s) = 0.$

\end{proof}

\begin{lemma}
\label{lem:CDH}
There is a probabilistic polynomial time algorithm which, given oracle access
to $\CDH(G_{p,1})$, solves \SECRET$(G_{p,1})$. 
The algorithm asks a single query to $\CDH(G_{p,1})$. 
If we are also given a quadratic non-residue in $\Z_p$,
the algorithm can be made deterministic.
\end{lemma}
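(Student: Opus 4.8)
The plan is to derive a single quadratic equation for the secret $s$ from one call to $\CDH(G_{p,1})$, solve it in $\F_p$, and then break the remaining two-fold ambiguity with the identity oracle $\Id_s$, which is the oracle input of $\SECRET(G_{p,1})$ and hence available to the reduction.

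First I would call $\CDH(G_{p,1})$ on the triple $(g,h,k)=((1,0),(0,1),(0,1))$. This is a legitimate instance: $\phi(g)=1$ by Proposition~\ref{prop:group}, so $g=(1,0)$ is a generator for every secret $s$. Write the returned element as $\ell=(\ell_0,\ell_1)\in\Z_p^2$. Since $(g,h,k,\ell)$ is a DH-quadruple, Proposition~\ref{prop:characterization} (equivalently $\phi(g)\phi(\ell)=\phi(h)\phi(k)$ in $F_{p,1}$, that is $\ell_0+\ell_1 s=s^2$) tells us that $s$ is a root of $X^2-\ell_1 X-\ell_0\in\F_p[X]$. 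Because $\ell_0+\ell_1 s=\phi(\ell)=s^2$ holds no matter which representative of the coset $\ell$ the oracle hands back, this monic quadratic constraint is well defined and independent of the oracle's choice.

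Next I would extract the roots. Assuming $p$ is odd (for $p=2$ the single query $\Id_s(0,1)$ already outputs $s$), the discriminant $\Delta=\ell_1^2+4\ell_0$ is a square in $\F_p$ precisely because $s$ is a root, and a square root $\delta$ of $\Delta$ can be computed in probabilistic polynomial time, for instance by Tonelli--Shanks fed with a randomly sampled quadratic non-residue, or by Cipolla's algorithm. The candidates are then $r_\pm=(\ell_1\pm\delta)/2$, and $s\in\{r_+,r_-\}$. If $\Delta=0$ there is one candidate, which we output. Otherwise I would disambiguate: for a candidate $r$ the identity oracle satisfies $\Id_s(-r,1)=1$ iff $(-r)\cdot 1+1\cdot s=0$, i.e.\ iff $r=s$, so at most two identity queries isolate and output the correct root. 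If a quadratic non-residue of $\Z_p$ is supplied, the square-root step, and therefore the entire algorithm, becomes deterministic, while still using exactly one query to $\CDH(G_{p,1})$.

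The one place where something real happens is the observation that a single $\CDH$ call intrinsically reveals only the field quantity $\phi(h)\phi(k)/\phi(g)\in\F_p$, presented through an arbitrary (possibly adversarially chosen) representative in $\Z_p^2$, so any such call yields at best a degree-two relation for $s$; this is why the final bit of disambiguation must come from the identity oracle rather than from a second $\CDH$ query. The remaining ingredient, and the only genuine computational obstacle, is taking a square root modulo $p$ — precisely the operation whose need for randomness is eliminated by the quadratic-non-residue hypothesis.
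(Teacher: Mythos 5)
Your proof is correct and follows essentially the same route as the paper: one $\CDH$ query on a fixed triple with $g=(1,0)$ yields, via Proposition~\ref{prop:characterization}, a quadratic over $\F_p$ with $s$ as a root, solved by Tonelli--Shanks (deterministic given a non-residue) and disambiguated with at most two $\Id_s$ queries. The only difference is your choice of triple $((1,0),(0,1),(0,1))$ versus the paper's $((1,0),(0,1),(1,1))$, which is immaterial.
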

\begin{proof}
The algorithm sets $g = (1,0), h= (0,1), k=(1,1)$ and presents it to the oracle. Let the oracle's answer be
$\ell = (\ell_0, \ell_1)$. Since $(g,h,k, \ell)$ is a DH-quadruple, by Proposition~\ref{prop:characterization}
we have that $s$ is the root of the second degree equation
$$
x^2 + (1-\ell_1)x + \ell_0 = 0.
$$
Assuming
that a quadratic non-residue in $\Z_p$ is available then the (not necessarily distinct) roots
$x_1, x_2$ can be computed in deterministic polynomial time
using the Shanks-Tonelli algorithm~\cite{Shanks}.
Without this assumption, a quadratic non-residue can always be computed in probabilistic
polynomial time
because for $p>2$ the quadratic residues form a subgroup
of index two of the multiplicative group of $\F_p$ and hence
$p>2$ half of the nonzero elements in $\Z_p$ are not squares.
Finally, we make at most two calls to $\Id_s$ 
on $(x_1, -1)$ and on $(x_2, -1)$. The positive answer tells us which one 
of the roots is the secret $s$.
\end{proof}

Similarly to the $\DLOG$ case, we could have presented 
with overwhelming probability also a random triple $(g,h,k) \in G_{p,s}^3$ to $\CDH(G_{p,1})$, 
where $g$ is a generator. 
Indeed, if the oracle answer is $\ell = (\ell_0, \ell_1)$ then $s$ is a root of the 
(at most second degree)
 equation
$$
(g_0 + g_1x)(\ell_0 + \ell_1 x) = (h_0 + h_1 x) (k_0 + k_1x).
$$
If the equation is of degree 2 then we can proceed as in the proof of Lemma~\ref{lem:CDH}.
This happens exactly when $h_1k_1 \neq g_1 \ell_1$. But for every possible fixed value $a$ for $g_1 \ell_1$,
the probability, over random $h_1$ and $k_1$, that $h_1k_1 = a$ is at most $2/p$, the worst case being $a=0$.
Therefore a random triple $(g,h,k)$ would be suitable for the proof with probability at least $(p-2)/p.$

\begin{theorem}
\label{thm:DLOG}
The following lower bounds hold for the query complexity of $\DLOG$ and $\CDH$: 
\begin{enumerate}
\item[$(1)$]
The classical query complexity of both $\DLOG(G_{p,s})$ and $\CDH(G_{p,s})$ is $\Omega({p})$.
\item[$(2)$]
The quantum query complexity of both $\DLOG(G_{p,s})$ and $\CDH(G_{p,s})$ is $\Omega(\sqrt{p})$.
\end{enumerate}
\end{theorem}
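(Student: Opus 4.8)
The plan is to reduce the lower bounds for $\DLOG(G_{p,1})$ and $\CDH(G_{p,1})$ to the already-established lower bounds for $\SECRET(G_{p,1})$, and then to propagate them to higher levels using Proposition~\ref{prop:reduction}. By Corollary~\ref{cor:secret} we know that $\RC(\SECRET(G_{p,1})) = \Omega(p)$ and $\QC(\SECRET(G_{p,1})) = \Omega(\sqrt{p})$. Lemma~\ref{lem:DLOG} shows that the secret can be recovered with a single query to a $\DLOG(G_{p,1})$ oracle plus polynomial-time (query-free) postprocessing; similarly Lemma~\ref{lem:CDH} shows that the secret can be recovered with a single query to a $\CDH(G_{p,1})$ oracle plus polynomial-time (and in the randomized case, two further $\Id_s$ queries) postprocessing. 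Since an algorithm for $\DLOG(G_{p,1})$ (resp.\ $\CDH(G_{p,1})$) using $k$ queries to $\Id_s$ can be plugged in wherever the oracle call is made, any such algorithm yields a $\SECRET(G_{p,1})$ algorithm with $O(k)$ queries to $\Id_s$ in the same model. Hence $k = \Omega(p)$ in the randomized model and $k = \Omega(\sqrt{p})$ in the quantum model, which is exactly the claimed bound for $t=1$.

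For $t \geq 2$, I would invoke Proposition~\ref{prop:reduction}, which gives a polynomial-time many-one reduction from $\DLOG(G_{p,t})$ to $\DLOG(G_{p,t+1})$, and shows $\CDH(G_{p,t})$ is computable in polynomial time with a single query to $\CDH(G_{p,t+1})$. Crucially, in that reduction the identity oracle $\Id_{n'}$ of the higher-level group is simulated by a single call to the identity oracle $\Id_n$ of the lower-level group (the argument $h^* \cdot n' = h \cdot n$), so the reduction does not blow up the query count. Iterating from level $1$ upward, any $k$-query algorithm for $\DLOG(G_{p,t})$ (resp.\ $\CDH(G_{p,t})$) gives an $O(k)$-query algorithm for $\DLOG(G_{p,1})$ (resp.\ $\CDH(G_{p,1})$), in both the randomized and quantum models; combined with the $t=1$ bounds this yields $\Omega(p)$ and $\Omega(\sqrt{p})$ for all $t \geq 1$.

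The one point that requires a little care — and which I would make explicit — is that composing oracle reductions preserves query lower bounds in the quantum model: one must check that when the $\DLOG$ or $\CDH$ ``oracle'' is itself implemented by a $k$-query quantum algorithm, substituting it into the reducing algorithm gives a legitimate quantum query algorithm with $O(k)$ queries and success probability bounded away from $1/2$. This is standard (the reductions here are adaptive only in the trivial sense of a single oracle call followed by postprocessing, and the postprocessing queries to $\Id_s$ are themselves few), but it is the place where an unwary argument could go wrong. Everything else is bookkeeping: assemble Corollary~\ref{cor:secret}, Lemma~\ref{lem:DLOG}, Lemma~\ref{lem:CDH}, and Proposition~\ref{prop:reduction}, and state the resulting bounds. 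I do not expect a genuine obstacle here; the substance of the theorem lies in the earlier lemmas, and this theorem is essentially their corollary.
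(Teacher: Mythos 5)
Your proof is correct and follows essentially the same route as the paper: reduce $\SECRET(G_{p,1})$ to $\DLOG(G_{p,1})$ and $\CDH(G_{p,1})$ via Lemmas~\ref{lem:DLOG} and~\ref{lem:CDH} and invoke Corollary~\ref{cor:secret}. Note only that $G_{p,s}$ in the statement denotes the level-$1$ group with secret $s$, so your second paragraph propagating the bounds to levels $t\geq 2$ via Proposition~\ref{prop:reduction} is not needed for this theorem as stated (though it is exactly how the introduction's claim for general $t$ is justified, and your handling of the reduction's direction and of quantum composition is sound).
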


\begin{proof}
Let us suppose that with $m$ queries to the identity oracle $\Id_s$, one can solve 
$\DLOG(G_{p,1})$ or $\CDH(G_{p,1})$. Respectively 
Lemma~\ref{lem:DLOG} 
and Lemma~\ref{lem:CDH} imply that \SECRET$(G_{p,1})$ can be solved with $m$ queries.
The result then follows from the lower bounds of  Corollary~\ref{cor:secret}.
\end{proof}

\begin{theorem}
\label{thm:DDH}
The $\DDH(G_{p,1})$ problem can be solved in probabilistic polynomial time. 
If we are given a quadratic non-residue in $\Z_p$ 
the algorithm can be made deterministic.
\end{theorem}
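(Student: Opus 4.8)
The plan is to reduce the decision, via Proposition~\ref{prop:characterization}, to a root test for an explicitly known polynomial of degree at most $2$, and then to settle that test with a constant number of calls to the identity oracle $\Id_s$ --- essentially reusing the ideas behind Lemma~\ref{lem:CDH}.

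Concretely, from the input $(g,h,k,\ell)$ I would first form the polynomial $q(x)=p_g(x)p_\ell(x)-p_h(x)p_k(x)\in\F_p[x]$. Since $p_g,p_h,p_k,p_\ell$ are affine, $q$ has degree at most $2$ and its coefficients are computable from the input in $\poly(\log p)$ bit operations. By Proposition~\ref{prop:characterization}, $(g,h,k,\ell)$ is a DH-quadruple if and only if $s$ is a root of $q$, so it suffices to decide whether $q(s)=0$. The tool for this is that a single query $\Id_s(a,b)$ reveals whether $a+sb=0$; in particular $\Id_s(-c,1)$ tests whether $s=c$.

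Next I would branch on $\deg q$. If $q\equiv 0$, output YES. If $q$ is affine and nonzero, say $q(x)=\alpha x+\beta$, query $\Id_s(\beta,\alpha)$ and output YES iff the reply is $1$ (this is correct even when $\alpha=0$, where it reports that the nonzero constant $q$ has no root). If $q(x)=\alpha x^2+\beta x+\gamma$ with $\alpha\neq 0$, assume $p$ is odd --- for bounded $p$ one simply recovers $s$ with $O(1)$ queries and decides directly --- and compute the discriminant $\Delta=\beta^2-4\alpha\gamma$: if $\Delta$ is a quadratic non-residue then $q$ has no root in $\F_p$, hence $s$ is not one, and the answer is NO; otherwise extract $\delta$ with $\delta^2=\Delta$ by the Shanks-Tonelli algorithm~\cite{Shanks}, form the (at most two) roots $x_{1,2}=(-\beta\pm\delta)(2\alpha)^{-1}$, query $\Id_s(-x_1,1)$ and $\Id_s(-x_2,1)$, and output YES iff at least one reply is $1$. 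Correctness in each case is immediate from Proposition~\ref{prop:characterization} and the description of $\Id_s$; the running time is $\poly(\log p)$ and at most two oracle queries are used. The only use of randomness is to produce a quadratic non-residue in $\Z_p$, which is needed only when $p\equiv 1\pmod 4$ and which is also precisely what makes the algorithm deterministic once such a non-residue is provided, since half of the nonzero elements of $\Z_p$ are non-residues.

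I do not anticipate a serious obstacle: the substantive point is just the elementary observation that $q$ has degree $\le 2$, so its roots lie in $\F_p$ and are efficiently computable. The care required is in routing the degenerate cases --- the zero polynomial, the merely affine polynomial, and a non-residue discriminant --- to the correct YES/NO verdict, and the single nontrivial computational ingredient, a square root modulo $p$, is exactly the one already used in Lemma~\ref{lem:CDH}.
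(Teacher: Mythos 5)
Your proposal is correct and follows essentially the same route as the paper: reduce via Proposition~\ref{prop:characterization} to testing whether $s$ is a root of the degree-at-most-$2$ polynomial $p_g(x)p_\ell(x)-p_h(x)p_k(x)$, handle the constant case directly, and otherwise compute the (at most two) candidate roots with Shanks--Tonelli and test each with one identity-oracle query. Your explicit treatment of the degenerate branches (the affine case via a single query $\Id_s(\beta,\alpha)$, and the non-residue discriminant forcing a NO answer) is slightly more careful than the paper's terse ``proceed as in Lemma~\ref{lem:CDH}'', but it is the same argument.
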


\begin{proof}
Let $(g,h,k, \ell)$ be an input to $\DDH(G_{p,1})$ where $g$ is a generator of $G_{p,1}$.
By Proposition~\ref{prop:characterization} it is a DH-quadruple if and only if $s$ is a root
of the polynomial $p_g(x) p_{\ell}(x) - p_h(x) p_k(x)$, and that is what the algorithm checks.
When the polynomial is constant, then the answer is yes if the constant is zero, and otherwise it is no.
When the polynomial is non constant, the algorithm essentially proceeds as the one in
Lemma~\ref{lem:CDH}. It solves the 
(at most second degree) 
equation and then checks with the identity oracle
if one root is equal to $s$. 
\end{proof}

\section{The complexity of $\DDH$ in groups of level 2}
There are several powerful means to prove quantum query lower bounds, 
most notably the adversary and the polynomial method~\cite{BBC}.
The quantum adversary method initiated by Ambainis \cite{Amb} has
been extended in several ways. The most powerful of those, the method using negative weights~\cite{HLS},
turned out to be an exact characterization of the quantum query complexity~\cite{LMR}.
We use here a special case of the positive weighted adversary 
method~\cite{Aaronson_min_search,Ambainis2,Zha2} that also gives probabilistic
lower bounds~\cite{Aaronson_min_search}. 


\begin{fact}
\label{fact:weights}
Let $A: S \rightarrow \{0,1\}$ be a  Boolean functional oracle problem, where $S \subseteq \{0,1\}^m$.
For any $S \times S$ matrix $M$, set 
$$\sigma(M,f)=\sum_{g\in S} M[f,g].$$
Let $\Gamma$ be an arbitrary $S\times S$ nonnegative symmetric matrix
that satisfies $\Gamma[f,g]=0$ whenever $A(f)=A(g)$. For $1\leq x \leq m$, let $\Gamma_x$ be the matrix
$$
\Gamma_x[f,g]=
\begin{cases}
0 & \text{ if } f(x)=g(x), \\
\Gamma[f,g] & \text{ otherwise.}
\end{cases}
$$
Then
$$ \QC(A)=\Omega\left(\min_{\Gamma[f,g]\neq 0,f(x)\neq g(x)}\sqrt{\frac{\sigma(\Gamma,f)\sigma(\Gamma,g)}
  {\sigma(\Gamma_x,f)\sigma(\Gamma_x,g)}}\right),$$
$$ \RC(A)=\Omega\left(\min_{\Gamma[f,g]\neq 0,f(x)\neq g(x)}\max\left\{\frac{\sigma(\Gamma,f)}{\sigma(\Gamma_x,f)},
   \frac{\sigma(\Gamma,g)}{\sigma(\Gamma_x,g)}\right\}\right). $$
\end{fact}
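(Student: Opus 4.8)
The plan is to obtain this Fact as the special case of the (positive‑)weighted adversary method in which the weight scheme is $\Gamma$ itself. Recall that the weighted adversary method, in the formulations of Ambainis and Zhang~\cite{Ambainis2,Zha2} (with the probabilistic variant of Aaronson~\cite{Aaronson_min_search}), takes as input a \emph{weight scheme}: nonnegative symmetric ``vertex weights'' $w(f,g)$ supported on pairs with $A(f)\neq A(g)$, together with ``edge weights'' $w'(f,g,x)\geq 0$ that vanish whenever $f(x)=g(x)$ and satisfy the compatibility condition $w'(f,g,x)\,w'(g,f,x)\geq w(f,g)^2$ whenever $f(x)\neq g(x)$. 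Writing $W(f)=\sum_g w(f,g)$ and $V(f,x)=\sum_g w'(f,g,x)$, the method yields
\[
\QC(A)=\Omega\!\left(\min\sqrt{\frac{W(f)\,W(g)}{V(f,x)\,V(g,x)}}\right),
\qquad
\RC(A)=\Omega\!\left(\min\max\left\{\frac{W(f)}{V(f,x)},\,\frac{W(g)}{V(g,x)}\right\}\right),
\]
where both minima are taken over triples $(f,g,x)$ with $w(f,g)>0$ and $f(x)\neq g(x)$.

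First I would set $w=\Gamma$, which is legitimate precisely because $\Gamma$ is, by hypothesis, nonnegative, symmetric, and supported on pairs with $A(f)\neq A(g)$. Then I would set $w'(f,g,x)=\Gamma_x[f,g]$, i.e.\ $w'(f,g,x)=\Gamma[f,g]$ when $f(x)\neq g(x)$ and $0$ otherwise. The compatibility condition is then immediate, and in fact tight: for $f(x)\neq g(x)$, symmetry of $\Gamma$ gives $w'(f,g,x)\,w'(g,f,x)=\Gamma[f,g]\,\Gamma[g,f]=\Gamma[f,g]^2=w(f,g)^2$. With this choice $W(f)=\sum_g\Gamma[f,g]=\sigma(\Gamma,f)$ and $V(f,x)=\sum_g\Gamma_x[f,g]=\sigma(\Gamma_x,f)$, so substituting into the two displayed bounds reproduces verbatim the two bounds asserted in the Fact.

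The only genuinely delicate point is bookkeeping. The weighted adversary method appears in the literature under several superficially different guises (vertex/edge weight schemes, relational formulations, and the spectral formulation of~\cite{BBC,HLS,LMR}), so I would be careful to cite a version whose hypotheses are literally ``nonnegative symmetric $w$, compatible $w'$'' and whose conclusion is the min‑of‑ratios bound above; the slightly less standard part is to check that the \emph{probabilistic} statement with $\max$ in place of $\sqrt{\cdot}$ is stated in~\cite{Aaronson_min_search} for exactly this weight scheme. Beyond the one‑line verification of compatibility, nothing else is needed — everything is a direct substitution. (If a self‑contained route were preferred, one could instead invoke the spectral adversary bound $\QC(A)=\Omega(\|\Gamma\|/\max_x\|\Gamma_x\|)$ and then lower‑bound $\|\Gamma\|$ and upper‑bound each $\|\Gamma_x\|$ by row sums using nonnegativity and symmetry, but the weighted‑adversary route is shorter and also yields the randomized bound in one shot.)
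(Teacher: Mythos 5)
The paper gives no proof of this Fact at all: it is stated as a known result, cited as ``a special case of the positive weighted adversary method~\cite{Aaronson_min_search,Ambainis2,Zha2}''. Your derivation --- taking the weight scheme $w=\Gamma$, $w'(f,g,x)=\Gamma_x[f,g]$, checking the compatibility condition holds with equality by symmetry, and observing $W(f)=\sigma(\Gamma,f)$, $V(f,x)=\sigma(\Gamma_x,f)$ --- is exactly the specialization the authors have in mind, so your proposal is correct and follows the same route, just spelled out explicitly.
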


\begin{theorem}
\label{thm:level2}
The following lower bounds hold for the query complexity of $\DDH$ in level $2$ 
hidden cyclic groups:
$$
\QC(\DDH(G_{p,2})) = \Omega(\sqrt{p}) ~~{\rm and} ~~
\RC(\DDH(G_{p,2})) = \Omega({p}).
$$
\end{theorem}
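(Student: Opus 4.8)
The plan is to apply the positive weighted adversary method of Fact~\ref{fact:weights} to $\DDH(G_{p,2})$. First I would set up the oracle encoding: an input to the problem is a choice of $2$-suitable secret $n = (1, n_1, n_2) \in \Z_p^3$, which determines the identity oracle $\Id_n$ (equivalently, the Grover-type oracle $\Delta_{(n_1,n_2)}$ over $\Z_p^2$ by Proposition~\ref{prop:simulation}'s analogue in level $2$); the standard input $(g,h,k,\ell)$ is fixed. By Proposition~\ref{prop:characterization}'s level-$2$ analogue (via Proposition~\ref{prop:DH-quadruple} and the polynomial map $\chi$), a quadruple $(g,h,k,\ell)$ is a DH-quadruple exactly when $(n_1,n_2)$ is a root of the bivariate polynomial $P(x_1,x_2) = p_g p_\ell - p_h p_k$, which has total degree at most $2$. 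So the strategy is: fix one cleverly chosen input quadruple so that $P$ is a fixed degree-$2$ polynomial whose zero set $Z(P) \subseteq \Z_p^2$ is a curve containing roughly $p$ points, and let the ``$A=1$'' instances be the oracles $\Id_n$ with $n$ on that curve and the ``$A=0$'' instances be those with $n$ off it. The function $A$ here is then essentially: does the hidden point lie on a fixed conic?

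Next I would build the adversary matrix $\Gamma$. The natural choice is to put uniform weight $1$ on pairs $(f,g)$ where $f = \Id_{n}$ with $n \in Z(P)$ (a yes-instance) and $g = \Id_{n'}$ with $n' \notin Z(P)$ (a no-instance), and $0$ elsewhere — possibly restricting to a carefully chosen balanced subset of no-instances (say the $n'$ lying on a ``parallel'' curve or in a generic region) so that the combinatorics are clean. Then $\sigma(\Gamma, f)$ counts the no-instances linked to a yes-instance $f$, and $\sigma(\Gamma,g)$ counts the yes-instances linked to a no-instance $g$; with the right choice both are $\Theta(p)$ (there are $\Theta(p)$ yes-instances and $\Theta(p^2)$ total, so I would thin the no-instances down to $\Theta(p)$ of them, or weight them so the row sums are $\Theta(p)$ on both sides). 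For a query $x \in \Z_p^2$ (an input to the Grover oracle), $\Gamma_x[f,g] \neq 0$ requires $f(x) \neq g(x)$, i.e. $x$ distinguishes $n$ from $n'$; since the Grover oracle $\Delta_n$ differs from $\Delta_{n'}$ only at the two points $x = n$ and $x = n'$, for a fixed $f = \Delta_n$ the only $x$ with $\Gamma_x[f,\cdot]\neq 0$ are $x=n$ (which hits all its linked no-instances) and $x = n'$ for each linked $n'$ (hitting just that one). So $\sigma(\Gamma_x, f)$ is either $\Theta(p)$ (when $x=n$) or $1$ (when $x = n'$), and symmetrically for $g$. Plugging into the bounds: the ratio $\sigma(\Gamma,f)/\sigma(\Gamma_x,f)$ is either $\Theta(1)$ or $\Theta(p)$, and likewise for $g$; but the minimum in Fact~\ref{fact:weights} is over pairs with $\Gamma[f,g]\neq 0$ and $f(x)\neq g(x)$, and for any such pair at least one of $x=n$, $x=n'$ holds — so in the randomized bound the $\max$ over the two ratios is always $\Theta(p)$ (if $x=n$ then the $g$-side ratio is $\Theta(p)$; if $x=n'$ then the $f$-side ratio is $\Theta(p)$), giving $\RC = \Omega(p)$; and in the quantum bound $\sqrt{\sigma(\Gamma,f)\sigma(\Gamma,g)/(\sigma(\Gamma_x,f)\sigma(\Gamma_x,g))} = \sqrt{\Theta(p)\cdot\Theta(p)/(\Theta(p)\cdot\Theta(1))} = \Theta(\sqrt p)$ (one of the two denominator factors is $\Theta(p)$, the other $\Theta(1)$), giving $\QC = \Omega(\sqrt p)$.

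The main obstacle is the bookkeeping needed to make all four row-sum quantities genuinely $\Theta(p)$ simultaneously and $2$-suitability-compatible: I must choose the fixed input quadruple so that $P$ is truly degree $2$ (not constant or linear) with a large zero set, verify that enough of the points $n$ on $Z(P)$ and enough off-curve points $n'$ are $2$-suitable (first coordinate normalizable to $1$ — this is automatic after the $\le t$-query reduction noted in the preliminaries, but I should state it), and thin or weight the no-instances so that every linked no-instance $g$ also has $\sigma(\Gamma,g) = \Theta(p)$ rather than $\Theta(p^2)$. A clean way to do this is to link each yes-instance $n$ only to no-instances $n'$ obtained by a fixed small perturbation (e.g. $n' = n + v$ for $v$ ranging over a fixed set of $\Theta(p)$ vectors avoiding the curve's tangent directions), making $\Gamma$ essentially a bipartite ``translation'' graph that is regular of degree $\Theta(p)$ on both sides. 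One should also double-check the degenerate case where the conic $Z(P)$ is a pair of lines or is empty, and simply pick the input quadruple to avoid those. Once the graph is regular of degree $\Theta(p)$ on both sides, the computation above goes through verbatim and yields both claimed bounds.
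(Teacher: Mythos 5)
There is a genuine gap, and it sits at the heart of your computation of the quantities $\sigma(\Gamma_x,f)$. You assert that the level-$2$ identity oracle $\Id_n$ is ``equivalently'' a Grover oracle $\Delta_{(n_1,n_2)}$ over $\Z_p^2$, invoking an analogue of Proposition~\ref{prop:simulation}. No such analogue holds: Proposition~\ref{prop:simulation} is special to level $1$, where a normalized query pins down a single candidate value of the secret. At level $2$ a query $h=(h_0,h_1,h_2)$ to $\Id_n$ asks whether the hidden point $(n_1,n_2)$ lies on the affine line $h_0+h_1x_1+h_2x_2=0$ in $\Z_p^2$, so a single query can distinguish $n$ from roughly $2(p^2-p)$ other secrets at once; while $\Delta_{(n_1,n_2)}$ can be simulated by two calls to $\Id_n$ (two lines through the candidate point), simulating one call to $\Id_n$ by Grover queries costs $\Omega(\sqrt p)$ quantumly. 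Since the identity oracle is the \emph{stronger} of the two, a lower bound proved against Grover-oracle algorithms does not transfer to algorithms with access to $\Id_n$ --- which is exactly the direction you need. Consequently your key counting step, ``$\Delta_n$ differs from $\Delta_{n'}$ only at the two points $x=n$ and $x=n'$, so $\sigma(\Gamma_x,f)$ is either $\Theta(p)$ or $1$,'' is false for the actual oracle, and the ratios you plug into Fact~\ref{fact:weights} are not justified.

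The rest of your setup (fixing a quadruple such as $((1,0,0),(0,1,0),(0,0,1),(0,1,1))$ so that being a DH-quadruple is equivalent to $(n_1,n_2)$ lying on the conic $x_1+x_2=x_1x_2$, and taking $\Gamma$ supported on on-conic/off-conic pairs) matches the paper, and the full bipartite $\Gamma$ works without any thinning. What replaces your Grover count is a geometric one: for a query $h$ with $(h_1,h_2)\neq(0,0)$, the set of secrets $m$ with $\Id_m(h)=1$ is a line in $\Z_p^2$, which meets the conic in at most $2$ points and contains at most $p$ points in total. This gives $\sigma(\Gamma_h,n')\le 2$ against $\sigma(\Gamma,n')=p-1$ for off-conic secrets, and $\sigma(\Gamma_h,n)\le p$ against $\sigma(\Gamma,n)=p^2-p+1$ for on-conic secrets; since $\Id_n(h)\neq\Id_{n'}(h)$ forces at least one of these two cases, both the $\Omega(p)$ randomized and $\Omega(\sqrt p)$ quantum bounds follow. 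Without this line-versus-conic counting your argument does not go through.
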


\begin{proof}
Let $i= ((1,0,0), (0,1,0), (0,0,1), (0,1,1))$.
Observe that the element $(1,0,0)$ is a generator
of $G_{p,2}$, for any 2-suitable vector $n=(1, n_1, n_2)$. 
By Proposition~\ref{prop:DH-quadruple}, we know that $i$ is a DH-quadruple if and only if
$n_1 + n_2 = n_1n_2$. We say that $n$ is {\em positive} if this equality holds, otherwise we say that it is
{\em negative}.
Let $m = p^3$ and let $S = \{ \Id_n : n \in \Z_p^2\}$.
We will apply Fact~\ref{fact:weights} to the Boolean functional oracle problem $\DDH$
defined in $G_{p,2}$ on input $i$ with the oracle input being the identity oracle
$\Id_n : \Z_p^3 \rightarrow \{0,1\}$. For simplicity we will refer to this Boolean functional oracle problem
just by $\DDH(n)$.
We define the symmetric $p^2 \times p^2$ Boolean adversary matrix $\Gamma$ as follows:
$$
\Gamma[n,n']=
\begin{cases}
1 & \text{ if } \DDH(n) \neq \DDH(n'), \\
0 & \text{ otherwise,}
\end{cases}
$$
where again $\Gamma[n,n']$ is a shorthand notation for $\Gamma[\Id_n,\Id_{n'}].$

We first determine $\sigma(\Gamma, n)$. If $n_1 = 1$ then there is no $n_2$ such that $n_1 + n_2 = n_1n_2$.
Otherwise, for every fixed $n_1 \neq 1$, there is a unique $n_2$ that makes this equality hold, in particular
$n_2 = n_1 (n_1 -1)^{-1}$. Therefore the number of positive $n$ is $p-1$ and the number of negative $n$ is
$p^2 -p +1$. Thus we have the following values for $\sigma(\Gamma, n)$:
$$
\sigma(\Gamma, n)=
\begin{cases}
p^2 -p +1 & \text{ if } n \text{ is positive, }\\
p-1 & \text{ otherwise.}
\end{cases}
$$
 
Let us recall, that by definition, for every $h \in G_{p,2}$,

\begin{equation}
\label{lower}
\Gamma_h[n,n']=
\begin{cases}
1 & \text{ if } \DDH(n) \neq \DDH(n') \text{ and } \Id_n(h) \neq \Id_{n'}(h), \\
0 & \text{ otherwise.}
\end{cases}
\end{equation}

We fix now $n$ and $n'$ such that $\DDH(n) \neq \DDH(n')$, we will suppose without loss of generality that
$n$ is positive and $n'$ is negative.
We also fix $h = (h_0, h_1, h_2)$ in $\Z_p^3$
such that $\Id_n(h) \neq \Id_{n'}(h)$. This implies that $(h_1, h_2) \neq (0,0)$.
We want to lower bound  $\sigma(\Gamma, n)/ \sigma(\Gamma_h,n)$ and
$\sigma(\Gamma, n')/ \sigma(\Gamma_h,n')$. Obviously both fractions are at least 1.
We distinguish two cases according to whether $\Id_n(h) = 0$ or $\Id_{n'}(h) = 0$.

Case 1: $\Id_{n'}(h) = 0$. Then 
$$
\sigma(\Gamma_h,n') = \left |\left \{(m_1,m_2) \in \Z_p^2 : 
\begin{array}{l}
m_1 + m_2 = m_1 m_2 
\text{ and } \\
h_0 + h_1m_1 + h_2m_2 = 0
\end{array}
\right\}\right|.
$$
We claim that the carnality at the right hand side is at most 2. We know already that $m_1 \neq 1$ and
$m_2 = m_1 (m_1 -1)^{-1}$. Therefore $m_1$ satisfies the second degree equation 
$$
h_1x^2 + (h_0 -h_1 +h_2)x -h_0 = 0.
$$
The number of roots of this equation is at most 2, unless the polynomial is 0.
But this can not be the case, because then $h_1=h_2 = 0$, a contradiction.
Therefore, taking into account~(\ref{lower}), we have
\begin{equation}
\label{neg}
\frac{\sigma(\Gamma, n')}{ \sigma(\Gamma_h,n')} = \Omega \left(\frac{p}{1}\right) = \Omega (p).
\end{equation}

Case 2: $\Id_{n}(h) = 0$. Then
$$
\sigma(\Gamma_h,n) \leq |\{(m_1,m_2) \in \Z_p^2 :  h_0 + h_1m_1 + h_2m_2 = 0\}|.
$$
Since $(h_1,h_2) \neq (0,0)$, the number of roots of this linear equation with two variables is $p$.
Therefore, again taking into account~(\ref{lower}), we have
\begin{equation}
\label{pos}
\frac{\sigma(\Gamma, n)}{ \sigma(\Gamma_h,n)} = \Omega \left(\frac{p^2}{p}\right) = \Omega (p).
\end{equation}
The statements of the theorem immediately follow from equations~(\ref{neg}) and~(\ref{pos}).
\end{proof}

Similarly to the remarks after Lemmas~\ref{lem:DLOG} and~\ref{lem:CDH}, we could have used
in the proof instead of $i$ a random input $(g,h,k,\ell)$, with high probability
of success.
Indeed, if we can show that the number of solutions
of the system of equations
\[
  \begin{cases}
    (g_0 + g_1x + g_2y) (\ell_0 + \ell_1 x + \ell_2 y) \\
	\mbox{~~}- (h_0 +h_1x +h_2y) (k_0 +k_1x +k_2 y)  & = 0 \\
    1 +u_1x + u_2 y & =   0
  \end{cases}
\]
is at most 2
for every 
$u = (1, u_1, u_2)$ in $\Z_p^3$, with $(u_1, u_2) \neq (0,0)$, 
then the same proof works.
To see what we claim
we observe first that $g_2\ell_2-h_2k_2$ is nonzero with probability at least
$(p-1)/p$. If this happens then we are done with every $u$
of the form $u=(1,u_1,0)$. Indeed, in that case $u_1\neq 0$ and 
the second equation implies $x=-1/u_1$ and by substituting
this in the first equation we obtain an equation in $y$ with
a proper quadratic term. 
To deal with those $u$ for which $u_2\neq 0$ we set
$\alpha=1/u_2$ and $\beta=u_1/u_2$. By the second equation
we have $y=-\beta x-\alpha$ and substituting this in the first
equation the polynomial becomes
$$P_0+P_1x+P_2x^2,$$
with
$$
P_0=A+B\alpha+C\alpha^2,\;\;
P_1=D+E\alpha+B\beta+2C\alpha\beta\;\;\mbox{and}\;\;
P_2=(F+E\beta+C\beta^2),
$$
where
$A=g_0\ell_0-h_0k_0$,
$B=h_0k_2+h_2k_0-g_0\ell_2-g_2\ell_0$,
$C=g_2\ell_2-h_2k_2)$,
$D=g_0\ell_1+g_1\ell_0-h_0k_1-h_1k_0$,
$E=h_1k_2+h_2k_1-g_1\ell_2-g_2\ell_1$ and
$F=g_1\ell_1+h_1k_1$.
Using Macaulay2~\cite{Macaulay2}, one can show that the ideal of 
$\F_p[g_0,\ldots,\ell_2,\alpha,\beta]$ generated by $P_0,P_1$ and $P_2$
contains a nonzero polynomial $f$ of degree six from $\F_p[g_0,\ldots,\ell_2]$. 
By the Schwartz-Zippel lemma~\cite{Schwartz,Zippel},
 $f$ takes a nonzero value with
probability at least $1-6/p$. If that happens then there exist no
$\alpha,\beta$ making the three coefficients $P_0,P_1$ and $P_2$
simultaneously zero. The overall probability of choosing a good $g,h,k,\ell$     
is therefore at least 1-7/p.

\section*{Acknowledgments}
Parts of this research was accomplished while the first two authors were
visiting the Centre for Quantum Technologies at the National University of Singapore
and also
while the last two authors were
Fellows in November 2018 at the Stellenbosch Institute for Advanced Study.
They would like to thank STIAS for support and hospitality.
The research at CQT was supported by the
Singapore National Research Foundation, the Prime Minister's Office, Singapore and the Ministry of Education, Singapore under the Research Centres of Excellence programme under research grant R 710-000-012-135.
This research was also partially funded by
the  Hungarian National Research, Development and Innovation Office --
NKFIH. In addition, this work has been supported in part by the European Union as H2020
Programme under grant agreement number ERC-669891 
and by the QuantERA ERA-NET Cofund project QuantAlgo.

\bibliography{nonunique}

\begin{thebibliography}{10}

\bibitem{Aaronson_min_search}
Scott Aaronson.
\newblock Lower bounds for local search by quantum arguments.
\newblock {\em SIAM J. Comput.}, 35(4):804--824, 2006.

\bibitem{Amb}
Andris Ambainis.
\newblock Quantum lower bounds by quantum arguments.
\newblock {\em J. Comput. Syst. Sci.}, 64(4):750--767, 2002.

\bibitem{Ambainis2}
Andris Ambainis.
\newblock Polynomial degree vs. quantum query complexity.
\newblock {\em J. Comput. Syst. Sci.}, 72(2):220--238, 2006.

\bibitem{BabBea}
L\'aszl\'o Babai and Robert Beals.
\newblock A polynomial-time theory of black-box groups i.
\newblock In {\em Groups St. Andrews 1997 in Bath, I}, volume 260, pages
  30--64. Cambridge Univ. Press, 1999.

\bibitem{BabSzem}
L\'aszl\'o Babai and Endre Szemer\'edi.
\newblock On the complexity of matrix group problems i.
\newblock In {\em FOCS}, pages 229--240. IEEE Computer Society, 1984.

\bibitem{BeaBab}
Robert Beals and L\'aszl\'o Babai.
\newblock Las vegas algorithms for matrix groups.
\newblock In {\em FOCS}, pages 427--436. IEEE Computer Society, 1993.

\bibitem{BBC}
Robert Beals, Harry Buhrman, Richard Cleve, Michele Mosca, and Ronald de~Wolf.
\newblock Quantum lower bounds by polynomials.
\newblock {\em J. ACM}, 48(4):778--797, 2001.

\bibitem{BBBV}
Charles~H. Bennett, Ethan Bernstein, Gilles Brassard, and Umesh Vazirani.
\newblock Strengths and weaknesses of quantum computing.
\newblock {\em SIAM J. Comput.}, 26(5):1510--1523, 1997.

\bibitem{CheMos}
Kevin K.~H. Cheung and Michele Mosca.
\newblock Decomposing finite abelian groups.
\newblock {\em Quantum Inf. Comput.}, 1(3):26--32, 2001.

\bibitem{Dam}
Ivan Damg{\aa}rd, Carmit Hazay, and Angela Zottarel.
\newblock Short paper on the generic hardness of ddh-ii.
\newblock Manuscript, \url{http://cs.au.dk/~angela/Hardness.pdf}, 2014.

\bibitem{Macaulay2}
Daniel~R. Grayson and Michael~E. Stillman.
\newblock Macaulay2, a software system for research in algebraic geometry.
\newblock Available at \url{http://www.math.uiuc.edu/Macaulay2/}.

\bibitem{Grover}
Lov~K. Grover.
\newblock Quantum mechanics helps in searching for a needle in a haystack.
\newblock {\em Physical Review Letters}, 79(2):325--328, 1997.
\newblock preliminary version in STOC 1996.

\bibitem{HLS}
Peter H\o{}yer, Troy Lee, and Robert Spalek.
\newblock Negative weights make adversaries stronger.
\newblock In David~S. Johnson and Uriel Feige, editors, {\em STOC}, pages
  526--535. ACM, 2007.

\bibitem{JouNgu}
Antoine Joux and Kim Nguyen.
\newblock Separating decision diffie-hellman from computational diffie-hellman
  in cryptographic groups.
\newblock {\em J. Cryptol.}, 16(4):239--247, 2003.

\bibitem{KanSer}
William~M. Kantor and \'Akos Seress.
\newblock {\em Black box classical groups,}, volume 208 of {\em Memoirs of the
  AMS, Vol. 208}.
\newblock American Mathematical Society, 2001.

\bibitem{LMR}
Troy Lee, Rajat Mittal, Ben~W. Reichardt, Robert Spalek, and Mario Szegedy.
\newblock Quantum query complexity of state conversion.
\newblock In Rafail Ostrovsky, editor, {\em FOCS}, pages 344--353. IEEE
  Computer Society, 2011.

\bibitem{MaurerWo99}
Ueli~M. Maurer and Stefan Wolf.
\newblock The relationship between breaking the diffie-hellman protocol and
  computing discrete logarithms.
\newblock {\em SIAM J. Comput.}, 28(5):1689--1721, 1999.

\bibitem{Mos}
Michele Mosca.
\newblock {\em Computations for Algebras and Group Representations}.
\newblock PhD thesis, University of Oxford, 1999.

\bibitem{Pohlig}
Stephen~C. Pohlig and Martin~E. Hellman.
\newblock An improved algorithm for computing logarithms over gf(p) and its
  cryptographic significance (corresp.).
\newblock {\em IEEE Trans. Inf. Theory}, 24(1):106--110, 1978.

\bibitem{Schwartz}
Jacob~T. Schwartz.
\newblock Probabilistic algorithms for verification of polynomial identities
  (invited).
\newblock In Edward~W. Ng, editor, {\em EUROSAM}, volume~72 of {\em Lecture
  Notes in Computer Science}, pages 200--215. Springer, 1979.

\bibitem{Shanks}
Daniel Shanks.
\newblock Five number-theoretic algorithms.
\newblock In {\em Proceedings of the second Manitoba conference on numerical
  mathematics}, pages 51--70, 1972.

\bibitem{Shor}
Peter~W. Shor.
\newblock Polynomial-time algorithms for prime factorization and discrete
  logarithms on a quantum computer.
\newblock {\em SIAM J. Comput.}, 26(5):1484--1509, 1997.

\bibitem{Shoup}
Victor Shoup.
\newblock Lower bounds for discrete logarithms and related problems.
\newblock In Walter Fumy, editor, {\em EUROCRYPT}, volume 1233 of {\em Lecture
  Notes in Computer Science}, pages 256--266. Springer, 1997.

\bibitem{Wat}
John Watrous.
\newblock Quantum algorithms for solvable groups.
\newblock In Jeffrey~Scott Vitter, Paul~G. Spirakis, and Mihalis Yannakakis,
  editors, {\em STOC}, pages 60--67. ACM, 2001.

\bibitem{Zha2}
Shengyu Zhang.
\newblock On the power of ambainis's lower bounds.
\newblock In Josep D\'\i{}az, Juhani Karhum\"aki, Arto Lepist\"o, and Donald
  Sannella, editors, {\em ICALP}, volume 3142 of {\em Lecture Notes in Computer
  Science}, pages 1238--1250. Springer, 2004.

\bibitem{Zippel}
Richard Zippel.
\newblock Probabilistic algorithms for sparse polynomials.
\newblock In Edward~W. Ng, editor, {\em EUROSAM}, volume~72 of {\em Lecture
  Notes in Computer Science}, pages 216--226. Springer, 1979.

\end{thebibliography}

\end{document}